\providecommand\options{dvips,letterpaper,aps,pra,amsmath}
\providecommand{\texorpdfstring}[2]{#1}
\newtheorem{dfn}{Def}
\newtheorem{thm}[dfn]{Theorem}
\DeclareMathOperator{\Tr}{Tr}
\renewcommand*{\Re}{\mathop{\mathrm{Re}}\nolimits}
\renewcommand*{\Im}{\mathop{\mathrm{Im}}\nolimits}
\newcommand*{\abs}[2][]
{#1\lvert{#2}#1\rvert}
\newcommand*{\nm}[2][]
{#1\lVert{#2}#1\rVert}
\newcommand*{\exv}[2][]
{#1\langle{#2}#1\rangle}
\newcommand*{\bra}[2][]
{#1\langle{#2}#1\rvert}
\newcommand*{\ket}[2][]
{#1\lvert{#2}#1\rangle}
\newcommand*{\braket}[3][]
{#1\langle{#2}#1\vert{#3}#1\rangle}
\newcommand*{\bracket}[4][]
{#1\langle{#2}#1\lvert{#3}#1\rvert{#4}#1\rangle}
\newcommand{\UTPhys}%
{Department of Physics, University of Tokyo, 7--3--1 Hongo, Bunkyou-ku, 113--0022, Japan}
\newcommand{\RIKEN}%
{RIKEN Center or Emergent Matter Science (CEMS), Wako, Saitama, 351--0198, Japan}
\begin{document}
\title{Finite-error metrological bounds on the multiparameter Hamiltonian estimation}
\date{\today}
\author{Naoto Kura}\affiliation{\UTPhys}
\author{Masahito Ueda}\affiliation{\UTPhys}\affiliation{\RIKEN}
\begin{abstract}
 Estimation of multiple parameters in an unknown Hamiltonian is investigated.
 We present upper and lower bounds on the time required to complete the estimation within a prescribed error tolerance $\delta$.
 The lower bound is given on the basis of the Cram\'er-Rao inequality, where the quantum Fisher information is bounded by the squared evolution time.  The upper bound is obtained by an explicit construction of estimation procedures.
 By comparing the cases with different numbers of Hamiltonian channels, we also find that
 the few-channel procedure with adaptive feedback and the many-channel procedure with entanglement are equivalent in the sense that they require the same amount of time resource up to a constant factor.
\end{abstract}
\maketitle

 \section{Introduction}
 Since the birth of quantum estimation due in large part to Holevo~\cite{Holevo1973} and Helstrom~\cite{Helstrom1969}, the information-theoretic aspects of quantum mechanics have been studied in many subfields of physics \cite{Gisin2002,Galindo2002,Peres2004,Herrero-Collantes2017}.
 Quantum metrology, the field in which the estimation of quantum dynamics is studied, marks significant differences between quantum and classical informatics.
 In estimating a phase-shift operator, for instance, the asymptotic accuracy increases in proportion to the amount of resource~\cite{Holland1993,Giovannetti2004}, which is quadratically better than the limitation set by classical statistics.
 This quantum-metrological advantage has been demonstrated in optomechanics~\cite{Higgins2007,Nagata2007,Higgins2009,Yonezawa2012} and ultracold atomic gases~\cite{Roos2006,Vengalattore2007,Toth2014,Napolitano2011}.
 The quantum-specific enhancement is related to some characteristic features in quantum mechanics, such as entanglement, spin squeezing and Bose statistics, on which quantitative studies have been carried out~\cite{Ballester2004,Higgins2009,Datta2012}.
 Furthermore, it has been pointed out that quantum computation, including the best-known quantum algorithms by Shor~\cite{Shor1994} and Grover~\cite{Grover1996}, takes advantage of quantum metrology~\cite{Kitaev1995,Demkowicz-Dobrzanski2015}.

 Quantum metrology originally targeted the one-parameter dynamics, which is essentially the estimation of a single phase.  In more general situations, however, inference on the dynamics involves more than one parameter, in which case the problem becomes more involved.
 For example, we need to take into consideration the simultaneous measurement of noncommutating observables and an exponential increase in the number of candidates for the true parameters. 
 As such, quantum metrology in multiparameter cases has attracted growing interest in recent years~\cite{Szczykulska2016,Spagnolo2012}, including the estimation of multiple phases~\cite{Macchiavello2003,Humphreys2013,Vidrighin2014,Yao2014,Berry2015}, the Hamiltonian itself~\cite{Schirmer2009,DaSilva2011,Yuan2016} and multidimensional fields~\cite{Zhang2014,Pang2014,Baumgratz2016}.  It is also known that estimation of a large-sized Hamiltonian plays a crucial role in setting computational bounds on quantum algorithms~\cite{Farhi1998,Demkowicz-Dobrzanski2015,Liu2016a}.
 The multiparameter quantum metrology also exhibits quantum enhancement~\cite{Imai2007} in that the resource can significantly be reduced by quantum mechanics.
 On the other hand, it remains unclear how the resource depends on the size of the Hilbert space and the number of parameters to be estimated.

 Recently, Yuan et al.\@ studied the Hamiltonian estimation in a $d$-dimensional Hilbert space~\cite{Yuan2015,Yuan2016}.
 By comparing sequential and parallel schemes for exploiting the quantum resource, they conclude that the latter is $O(d)$ times more efficient than the former in estimating the full Hamiltonian.
 The proof involves two assumptions.  First, the vector parameter $\theta$ to be estimated is sufficiently close to a certain value $\theta_0$.  In other words, there exists a ``search radius'' $E$ such that $\nm{\theta-\theta_0}\le E$ is presupposed.
 Second, the Hamiltonian $H$ is replaced by the unitary channel $e^{-i\tau H}$, with the evolution time $\tau$ fixed.
 Here the following problem arises: although the search radii for two schemes are both sufficiently small, their ratio is found to be nowhere near unity.
 In fact, to compare the two schemes with $r$ unitary channels, the radius for the sequential scheme should be $r$ times smaller than that for the parallel scheme, since the former undergoes $r$ times longer evolution than the latter.
 Noting that a larger search radius implies a stronger procedure, the comparison between the two schemes made in Ref.~\cite{Yuan2016} is generally not fair for large $r$.

 We address the multiparameter quantum metrology in the following setting:  We fix the tolerated error $\delta$, and suppose that the evolution time $\tau$ can be arbitrary.
 We obtain upper and lower bounds on the time resource required for the estimation in terms of $m, d$ and $\delta$, where $m$ is the number of parameters and $d$ is the dimension of the Hilbert space.
 In particular, we find that the time resource scales with $\delta^{-1}$, which explicitly shows the quantum-metrological limit.
 Furthermore, we find that the sequential and parallel schemes require the same amount of time resource up to some constant factor in order to achieve the same accuracy of estimation, contrary to Yuan's result.
 
 Let us explain how the difference arises.  First, one needs to prepare $N$ copies of probe states $\ket{q_\theta}$ by using the Hamiltonian $H_\theta$, from which the unknown vector $\theta$ is estimated.
 Given an unbiased estimator of $\theta$, which we denote by $\theta^*$, the covariance matrix $V(\theta)$ is bounded from below by the quantum Cram\'er-Rao (QCR) inequality:
 \begin{gather}
  V(\theta) \ge N^{-1}J(\theta)^{-1},
   \label{QCR-inequality}
 \end{gather}
 where $J(\theta)$ is the quantum Fisher information (QFI) matrix of $\ket{q_\theta}$ formulated as
 \begin{equation}
  [J(\theta)]_{jk} = 4\Re \bra[\big]{\tfrac{\partial}{\partial\theta_j} q_\theta}
   \bigl[1-\ket{q_\theta}\bra{q_\theta}\bigr] \ket[\big]{\tfrac{\partial}{\partial\theta_k} q_\theta}.
   \label{definition-QFI}
 \end{equation}
 Suppose that $N$ copies of the quantum states are given.
 Noting that $\Tr [V(\theta)]$ is the expectation value of $\nm{\theta^*-\theta}^2$, the accuracy $\delta$ can be achieved when
 \begin{equation}
  N^{-1}\Tr [J(\theta)^{-1}] \le \Tr [V(\theta)] \sim \delta^2,
 \end{equation}
 or equivalently when
 \begin{equation}
  N \agt \delta^{-2}\Tr [J(\theta)^{-1}].
   \label{QCR-asymptote}
 \end{equation}
 The equality in \eqref{QCR-asymptote} can be asymptotically saturated for sufficiently large $N$ on the condition that an appropriate measurement exists, which is the case with Ref.~\cite{Yuan2016}.
 As a result, the equality in \eqref{QCR-asymptote} is satisfied in the limit of $\delta\to 0$.
 If we denote by $\tau$ the time it takes to prepare a probe state $\ket{q_\theta}$, the metrological bound can be written as
  \begin{align}
   T &= N\tau \sim \delta^{-2} f(\tau) \qquad (\delta\to 0), \\
   f(\tau) &= \tau \inf_{\ket{q_\theta}} \Tr [J(\theta)^{-1}],
   \label{tau-specific-bound}
 \end{align}
 where the infimum is taken over all probe states $\ket{q_\theta}$ that can be prepared during time $\tau$.

 While \eqref{tau-specific-bound} gives a rigorous relation for every fixed $\tau$, it is guaranteed only for sufficiently small $\delta$.
 Moreover, the extent to which $\delta$ should be small depends on the time $\tau$ in the general case.
 Therefore, the metrological bound for finite $\delta>0$ cannot be determined from \eqref{tau-specific-bound} alone in the situation where $\tau$ can be chosen arbitrarily.

 In this article, we prove a robust metrological bound in the form of
 \begin{math}
  T \sim C\delta^{-1},
 \end{math}
 which does not postulate the small $\delta$ limit.
 The constant $C$ only depends on the Hamiltonian model $H_\theta$ that serves as an information resource in a closed quantum system.
 Although we focus on bounds on the estimation time, the result is indeed applicable to other types of metrological bounds for a given evolution time fixed.  For example, the results can be used to bound the energy amplification of the Hamiltonian, since quantum evolution is based on the product of time and energy.
 Another important corollary is on the number of channels or photons.  As we explain in Sec.~\ref{sec:2}, a lower bound on the time also sets a lower bound on the number of channels.  As the contraposition, an upper bound on the required time can be derived from the upper bound on the number of channels, which we show in Sec.~\ref{sec:4}.

 The rest of this article is organized as follows:  In Sec.~\ref{sec:2}, we describe the basic setting of the Hamiltonian estimation problem.
 In Sec.~\ref{sec:3}, we compute a lower bound on the total time required for the estimation.
 In Sec.~\ref{sec:4}, we present an upper bound on the total time by constructing two explicit procedures, one for the sequential and the other for the parallel scheme.


 \section{Preliminaries}
 \label{sec:2}
  \subsection{Hamiltonian Model}
  We consider a Hamiltonian $H_\theta$ in a $d$-level system $\mathcal{H}_\mathrm{D} = \mathbb{C}^d$ that depends linearly on an unknown parameter $\theta \in \mathbb{R}^m$:
  \begin{align}
   H_\theta &= \sum_{j=1}^{m} \theta^jX_j, \label{H-model-defn} &
   \theta &= (\theta^1,\dotsc,\theta^m).
  \end{align}
  Here the operators $X_1,\dotsc,X_m$ are Hermitian, traceless, and satisfy
  \begin{math}
   \Tr X_jX_k = \delta_{jk}.
  \end{math}
  In other words, $\{X_1,\dotsc,X_m\}$ forms an orthonormal basis of an $m$-dimensional subspace of $\mathrm{su}(d)$ with respect to the Hilbert-Schmidt (HS) inner product.
  The Hamiltonian model is specified by this subspace, in which the Hamiltonian is assumed to exist.
  In particular, we have a $(d^2-1)$-dimensional model with all possible Hamiltonians, which we refer to as the \emph{full model}.  Another example is specified by $d-1$ simultaneously diagonalized matrices, which we call the \emph{phase estimation model}.
  
  \subsection{Estimation Procedure}
  \label{ss:EstProc}
  In quantum metrology, one needs to generate a probe state $\ket{q_\theta}$ through an unknown Hamiltonian $H_\theta$.
  First, we consider a scheme depicted in Fig.~\ref{fig:r-channel-scheme}~(a), in which $r$ channels are driven by the Hamiltonian $H_\theta$.
  Since the total system consists of $r$ copies of driven systems plus an ancilla, the total Hilbert space can be written as $\mathcal{H}_\mathrm{tot} = \mathcal{H}_\mathrm{D}^{\otimes r}\otimes \mathcal{H}_\mathrm{A}$.
  \begin{figure*}[tb]
   \includegraphics{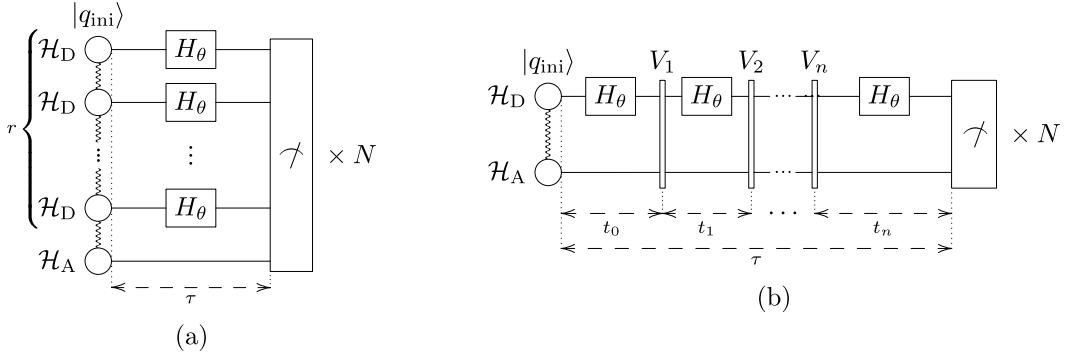}
   \caption{\label{fig:r-channel-scheme}
   Schematic diagrams for Hamiltonian estimation
   (a)~for an $r$-channel procedure without feedback control and
   (b)~for a $1$-channel procedure with feedback control $V_1,\dotsc,V_n$.
   The initial state $\ket{q_\mathrm{ini}}$ is possibly entangled between the driven systems ($\mathcal{H}_\mathrm{D}$) and the ancillae ($\mathcal{H}_\mathrm{A}$).
   The driven systems evolve according to the Hamiltonian $H_\theta$ during a sequence of time intervals indicated below the diagram, after which a probe state is obtained.
   In the course of estimation, $N$ copies of the quantum states are measured.
   }
  \end{figure*}

  The probe state after an evolution time $\tau$ can be written as
  \begin{align}
   \ket{q_\theta} &= U_r(\tau)\ket{q_\mathrm{ini}}, &
   U_r(t) &= (e^{-it H_\theta})^{\otimes r}\otimes I_\mathrm{A},
  \end{align}
  where $I_\mathrm{A}$ is the identity operator on $\mathcal{H}_\mathrm{A}$.
  Finally, the parameter $\theta$ is estimated by measuring $N$ copies of probe states: $\ket{q_\theta}^{\otimes N}$.  The total time resource for this procedure amounts to $T=Nr\tau$, which we want to minimize.

  We may introduce \emph{feedback control} in the estimation procedure.
  Though it may involve general Kraus measurements or nonunitary evolutions, any feedback control can be represented by a series of unitary transformations~\cite{Stinespring1955}.
  We denote these unitary operations by $V_1,V_2,\dotsc,V_n$ on the total Hilbert space $\mathcal{H}_\mathrm{tot} = \mathcal{H}_\mathrm{D} \otimes \mathcal{H}_\mathrm{A}$.
  These operations are performed according to a given procedure as shown in Fig.~\ref{fig:r-channel-scheme}~(b).

 If we denote by $t_0,t_1,\dotsc,t_n$ the time intervals between the unitary operations, the probe state can be written as
  \begin{equation}
   \ket{q_\theta} = U_1(t_n)V_n \dotsm U_1(t_1)V_1U_1(t_0)\ket{q_\mathrm{ini}}.
  \end{equation} 

  We note that an $r$-channel procedure with evolution time $\tau$ can be simulated by a $1$-channel feedback procedure with evolution time $r\tau$.
  This can be done as follows:
  we consider an $r$-channel procedure with driven channels $\mathrm{D}_1, \mathrm{D}_2,\dotsc, \mathrm{D}_r$.
  In the corresponding $1$-channel procedure, we regard $\mathrm{D}_1$ as the only driven channel and include the rest in the ancillary system.
  Let $V$ be a unitary operator that permutes the channels such that $\mathrm{D}_1, \mathrm{D}_2,\dotsc,\mathrm{D}_r$ are substituted by $\mathrm{D}_r, \mathrm{D}_1,\dotsc,\mathrm{D}_{r-1}$, respectively.
  If one applies the operator $V$ after every interval of time $t$, each system $\mathrm{D}_1,\dotsc,\mathrm{D}_r$ will be driven by $H_\theta$ after the $r$th interval.  In this way, the time-$t$ evolution over $r$ channels can be reduced to the time-$rt$ evolution over one channel.
  \subsection{Success Criterion}
  An estimation procedure ultimately ends by yielding an estimator $\theta^*$ for the vector parameter $\theta$ to be estimated.
  The estimation is successful when the Euclid norm $\nm{\theta^*-\theta}$ does not exceed a small value $\delta$.  We require that the probability of failure be sufficiently small, say,
  \begin{equation}
   \mathbb{P}\bigl[\nm{\theta^*-\theta} > \delta \bigr] \le p_\mathrm{crit} := 0.05.
    \label{success-crit}
  \end{equation}
  We note that the actual value of $p_\mathrm{crit}$ is not important as long as it lies between zero and $1/2$, since the dependence on $p_\mathrm{crit}$ is known to be at most logarithmic~\cite{Chernoff1952}.

  Furthermore, we set the \emph{search radius} $E>0$ such that $\nm{\theta}\le E$ is presumed for the vector parameter $\theta\in \mathbb{R}^m$.
  The condition \eqref{success-crit} must then be satisfied for all $\theta$ within that radius.
  We will see later that the metrological bound does not depend on the search radius $E$, as long as $E$ is finite and the ratio $\delta/E$ is kept below some constant (e.g.\@ $\delta/E \le 1/5$).

 \section{The Cram\'{e}r-Rao bound on Hamiltonian Estimation}
 \label{sec:3}
 The primary concern of this article is the minimal time resource $T$ required for the successful estimation of the Hamiltonian $H_\theta$.
 First, we derive a lower bound on $T$ that must hold for any kind of estimation procedures.
 As discussed in Sec.~\ref{ss:EstProc}, any multiple-channel procedure can be reduced to a $1$-channel feedback procedure with the same time resource.  Hence, we only consider the latter case in this section.

 We recall that a procedure consists of two types of process: a continuous process governed by the $\theta$-dependent Hamiltonian $H_\theta$ and a discrete process governed by the $\theta$-independent unitary operator $U_k$.
 Moreover, the QFI is by definition invariant under the discrete process, which is natural since the unitary transformation does not convey any information about $\theta$.
 Hence the QFI at time $t$, which we denote by $J(\theta,t)$, can increase only in the continuous process, leading to the following theorem.
 \begin{thm}
  \label{th:QFI-time-bound}
  Let us define an operator $\mathbf{X}$ on $\mathcal{H}_\mathrm{D}$ by
  \begin{equation*}
   \mathbf{X} = \sum_{j=1}^m (X_j)^2.
  \end{equation*}
  Then, the QFI $J(\theta,t)$ satisfies $\Tr J(\theta,t) \le 4ct^2$ with $c=\nm{\mathbf{X}}$ being the operator norm of $\mathbf{X}$.  In particular, the trace of the QFI for the final state is at most $4c\tau^2$.
 \end{thm}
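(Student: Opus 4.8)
The plan is to express the QFI through a \emph{local generator} of $\theta$-translations and then control that generator by a driving-time integral of unitarily rotated copies of the $X_j$'s. Write the probe evolution as $\ket{q_\theta}=\mathcal{U}(\theta)\ket{q_\mathrm{ini}}$, where $\mathcal{U}(\theta)$ is the ordered product of the continuous factors $e^{-it_\ell H_\theta}\otimes I_\mathrm{A}$ and the $\theta$-independent control unitaries $V_\ell$. Since the probe is pure, $\ket[\big]{\tfrac{\partial}{\partial\theta^j}q_\theta}=-iG_j\ket{q_\theta}$ with the Hermitian operator $G_j:=i\bigl(\tfrac{\partial}{\partial\theta^j}\mathcal{U}\bigr)\mathcal{U}^\dagger$; inserting this into \eqref{definition-QFI} yields $[J(\theta)]_{jk}=4\Re\bigl(\exv{G_jG_k}-\exv{G_j}\exv{G_k}\bigr)$, i.e.\ (the real part of) the covariance of $G_j$ and $G_k$ in the state $\ket{q_\theta}$. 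In particular $\Tr J(\theta)=4\sum_j\operatorname{Var}(G_j)\le 4\sum_j\exv{G_j^2}=4\,\exv[\big]{\sum_j G_j^2}$, so it suffices to bound this last expectation value by $ct^2$.

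Next I would compute $G_j$. Differentiating the ordered product, the control unitaries $V_\ell$ drop out (they are $\theta$-independent), and collecting the remaining terms — one per continuous factor, each evaluated with the help of $\tfrac{\partial}{\partial\theta^j}e^{-it_\ell H_\theta}=-i\int_0^{t_\ell}e^{-isH_\theta}X_je^{-i(t_\ell-s)H_\theta}\,ds$ — one obtains $G_j=\int_0^{t}Y_j(s)\,ds$, where $s$ runs over the total driving time accumulated so far, $Y_j(s)=W(s)\,(X_j\otimes I_\mathrm{A})\,W(s)^\dagger$, and $W(s)$ is a unitary on $\mathcal{H}_\mathrm{tot}$ — the portion of the protocol following the point labelled by $s$ — which does \emph{not} depend on $j$. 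The crucial consequence of this $j$-independence is $\sum_j Y_j(s)^2=W(s)\,(\mathbf{X}\otimes I_\mathrm{A})\,W(s)^\dagger$, whose operator norm equals $\nm{\mathbf{X}}=c$ for every $s$.

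Finally, for the unit vector $\ket{q_\theta}$ I would apply the Cauchy--Schwarz inequality to the vector-valued integral: $\exv{G_j^2}=\nm[\big]{\int_0^tY_j(s)\ket{q_\theta}\,ds}^2\le t\int_0^t\nm{Y_j(s)\ket{q_\theta}}^2\,ds$. Summing over $j$ and using $\sum_j\nm{Y_j(s)\ket{q_\theta}}^2=\bra{q_\theta}\!\bigl(\sum_jY_j(s)^2\bigr)\!\ket{q_\theta}\le c$ gives $\exv[\big]{\sum_j G_j^2}\le t\int_0^tc\,ds=ct^2$, hence $\Tr J(\theta,t)\le 4ct^2$; evaluating at the end of the protocol ($t=\tau$) gives the claimed bound on the final state. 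I expect the only delicate point to be the bookkeeping in the second step — checking that the interspersed control unitaries drop out of $G_j$ and that the residual integrand is a unitary conjugate of $X_j$ with a $j$-independent conjugator — while the rest is just the variance identity for the QFI together with a single Cauchy--Schwarz estimate. (An equivalent route bypasses the explicit formula for $G_j$: differentiating the Schr\"odinger equation with source $X_j\ket{q_\theta}$ shows that $t\mapsto\bigl(\sum_j\nm[\big]{\tfrac{\partial}{\partial\theta^j}q_\theta}^2\bigr)^{1/2}$ grows at rate at most $\exv{\mathbf{X}}^{1/2}\le\sqrt c$ while driving and is unchanged by the controls, so it is $\le\sqrt c\,t$ by the triangle inequality in $\bigoplus_j\mathcal{H}_\mathrm{tot}$, giving the same conclusion.)
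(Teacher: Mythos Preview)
Your proof is correct, but it follows a different route from the paper's. The paper argues \emph{differentially}: it computes $\tfrac{\partial}{\partial t}F_j$ with $F_j=\tfrac14[J(\theta,t)]_{jj}$ from the Schr\"odinger equation, bounds it by $2\sqrt{F_jG_j}$ via Cauchy--Schwarz (with $G_j=\bracket{q_\theta}{X_j^2}{q_\theta}$), applies a second Cauchy--Schwarz to obtain $\tfrac{\partial}{\partial t}\sqrt{\Tr J}\le 2\sqrt{\exv{\mathbf X}}\le 2\sqrt{c}$, and then integrates. You instead argue \emph{integrally}: you write the generator $G_j$ in closed form as $\int_0^t W(s)(X_j\otimes I_\mathrm{A})W(s)^\dagger\,ds$ and apply Cauchy--Schwarz once to this time integral, using the $j$-independence of $W(s)$ to reconstitute $\mathbf X$ under the conjugation. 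Your parenthetical alternative at the end is essentially the paper's argument (modulo retaining the projector $1-\ket{q_\theta}\bra{q_\theta}$, which does not change the final bound). The generator approach has the virtue of making the structure of $G_j$ explicit and connecting directly to the variance form of the QFI; the paper's differential argument has the minor advantage of yielding the pointwise estimate $\tfrac{\partial}{\partial t}\sqrt{\Tr J}\le 2\sqrt{\bracket{q_\theta(t)}{\mathbf X}{q_\theta(t)}}$ before passing to the operator norm, which can be sharper for specific initial states or for time-varying models.
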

 \begin{proof}
  We omit the identity operator on the ancillary system since it does not affect the claim of the theorem.
  Let us define a matrix $G(\theta,t)$ by
  \begin{equation}
   [G(\theta,t)]_{jj'} = \bra[\big]{q_\theta(t)}X_jX_{j'}\ket[\big]{q_\theta(t)}.
  \end{equation}
  In Appendix~\ref{ss:proof-QFI-bound}, we show that the growth rate of the QFI is bounded as
  \begin{equation}
   \frac{\partial}{\partial t}\sqrt{\Tr J(\theta,t)} \le \sqrt{4\Tr G(\theta,t)}.
  \end{equation}
  Since $\Tr G(\theta,t) = \bracket[\big]{q_\theta(t)}{\mathbf{X}}{q_\theta(t)} \le c$, the growth rate of the QFI is bounded regardless of the procedure as
  \begin{equation}
   \frac{\partial}{\partial t}\sqrt{\Tr J(\theta,t)} \le \sqrt{4c}.
    \label{sqrt-trJ-bound}
  \end{equation}
  In addition, we have $J(\theta,t{=}0)=0$ since the initial state does not depend on $\theta$.
  Equation~\eqref{sqrt-trJ-bound} can thus be integrated, giving $\sqrt{\Tr J(\theta,t)} \le \sqrt{4c}t$, from which the theorem follows.
 \end{proof}

 Let us say that the Hamiltonian model is \emph{spherical} if an additional condition
 \begin{equation}
  \mathbf{X} = \sum_{j=1}^m (X_j)^2 \propto I
   \label{spherical-cond}
\end{equation}
 is satisfied.
 We note that both the full model and the phase estimation model meet this condition.
 The proportionality constant is determined from $\mathbf{X} = (m/d)I$, which is confirmed by comparing the trace.
 This results in an upper bound for QFI:
 \begin{equation}
  \Tr J(\theta) \le \frac{4m}{d}{\tau^2}. \label{QFI-upper-bound}
 \end{equation}
 The spherical condition can be interpreted as follows:  Given that the parameter $\theta$ has a prior distribution with the spherical symmetry in $\mathbb{R}^m$, the model is spherical if and only if the prior expectation value of $(H_\theta)^2$ is proportional to the identity operator.

 Now, a lower bound on $T$ can then be derived from \eqref{QFI-upper-bound}; the QCR inequality implies
 \begin{equation}
  \Tr [V(\theta)] \ge \frac{\Tr [J(\theta)^{-1}]}{N} \ge
   \frac{m^2}{N\Tr [J(\theta)]} \ge \frac{md}{4N\tau^2},
   \label{cov-lower-bound}
 \end{equation}
 where the second inequality follows from the Schwartz inequality
 \begin{math}
  \Tr[J(\theta)^{-1}]\Tr [J(\theta)] \ge (\Tr I)^2 = m^2.
 \end{math}
 Since the successful estimation requires $\Tr [V(\theta)] \sim \delta^2$, one obtains a trade-off relation
 \begin{equation}
  N\tau^2 \agt md/\delta^{2}. \label{QC-tradeoff}
 \end{equation}
 We combine this relation with $N\ge 1$ to obtain a lower bound on $T$:
 \begin{align}
   T = N\tau &= N^{1/2}(N\tau^2)^{1/2} \\
   &\agt 1 (m d/\delta^2)^{1/2} = O\bigl((m d)^{1/2}/\delta\bigr).
   \label{Time-lower-bound}
 \end{align}
 Although some careful treatment is necessary for the general situation with a biased estimator, the lower bound \eqref{Time-lower-bound} is unchanged up to a constant factor.
 The detail is described in Appendix~\ref{ss:UCRB}.
 
 A lower bound for a nonspherical model is similarly obtained from Theorem~\ref{th:QFI-time-bound}, but depends on $c = \nm{\mathbf{X}}$:
 \begin{equation}
  T \agt O(c^{1/2}d/\delta^{2}).
 \end{equation}
 For instance, we consider a model with $m=d-1$ parameters:
 \begin{equation}
  X_j = \frac{1}{\sqrt{2}}\bigl(\ket{e_j}\bra{e_d}+\ket{e_d}\bra{e_j}\bigr)
   \quad (1\le j\le d-1),
 \end{equation}
 where $\{\ket{e_1}, \dotsc, \ket{e_d}\}$ is the basis of $\mathbb{C}^d$.
 With this model, we find $c=\frac{d-1}{2}$, which becomes much larger than $m/d = O(1)$ in the large $d$ limit.  Furthermore, if we let the initial state $\ket{q_\mathrm{ini}} = \ket{d}$ freely evolve by $H_\theta$, the inequality $\Tr J(\theta,t) \le 4c\tau^2$ in Theorem~\ref{th:QFI-time-bound} can be saturated.  Since the Fisher information $\Tr J(\theta) = 4c\tau^2 = 2(d-1)\tau^2$ violates the inequality in \eqref{QFI-upper-bound}, the same lower bound as \eqref{Time-lower-bound} cannot be derived in the nonspherical case.

 \section{Efficient Procedures for Hamiltonian Estimation}
 \label{sec:4}
 We need an explicit estimation protocol to establish an upper bound on the time resource $T$.
 Noting that the QCR bound is not saturated in general, the reverse inequality $T\le O\bigl((md)^{1/2}/\delta\bigr)$ is not guaranteed.
 In fact, the saturation of \eqref{Time-lower-bound} requires that the QCR inequality be saturated [i.e.\@ $N\tau^2=O(md/\delta^{2})$] and that the number of samples be small [i.e.\@ $N=O(1)$].
 We need to control the quantum state $\ket{q_{\theta}}$ for all $\theta$ to satisfy these two competing requirements simultaneously.
 Such a control is rather difficult because the dependency of $\ket{q_{\theta}}$ on $\theta$ becomes generally chaotic with the large evolution time $\tau = O\bigl((md)^{1/2}/\delta\bigr)$.
 At present, we find the lower bound to be saturated in the simplest cases with $m=O(1)$, as we discuss in Sec.~\ref{sec:4.5}.

 For a generic situation, we obtain a looser but rigorous upper bound.
 First, we present an $O(m d E/\delta^{2})$ procedure in the $1$-channel scheme.
 After that, we improve the procedure in two distinct ways in order to achieve $O(m d/\delta)$: by introducing the adaptive feedback and by increasing the number of channels.

  \subsection{The one-channel scheme}
  \label{ss:1chScheme}
  First, we consider a simplest scheme corresponding to Fig.~\ref{fig:r-channel-scheme}~(a) with $r=1$.
  As an input state, we consider the maximally entangled state (MES) $\ket{\Phi}$ associated with the Hilbert space $\mathcal{H}_\mathrm{D}$.
  The MES involves an ancillary Hilbert space $\mathcal{H}_A$ of the same dimension as $\mathcal{H}_\mathrm{D}$:
  \begin{equation}
   \ket{\Phi} = \frac{1}{d^{1/2}}\sum_{j=1}^{d} \ket{e_j}\otimes\ket{e_j'} \in \mathcal{H}_\mathrm{D}\otimes \mathcal{H}_\mathrm{A}.
    \label{MES-defn}
  \end{equation}
  Here $\bigl\{\ket{e_j}\bigr\}$ and $\bigl\{\ket{e_j'}\bigr\}$ are orthonormal bases of $\mathcal{H}_\mathrm{D}$ and $\mathcal{H}_\mathrm{A}$, respectively.  After the time-$\tau$ evolution, we obtain the probe state $\ket{q_\theta} = (U_\theta\otimes I_\mathrm{A})\ket{\Phi}$ with $U_\theta=e^{-i\tau H_\theta}$.

  The crucial point is that, for the full model, the manifold formed by the probe states $\{ \ket{q_\theta} \mid \theta\in \mathbb{R}^m \}$ is geometrically similar to the Lie group $\mathrm{SU}(d)$.
  The QFI is therefore in proportion to the invariant metric of the Lie group:
  \begin{equation}
   [J(\theta)]_{jk} = \frac{4}{d}
    \Tr\biggl[\frac{\partial U_\theta^\dagger}{\partial\theta^j}\frac{\partial U_\theta}{\partial\theta^k}\biggr].
    \label{Haar-metric}
  \end{equation}
  Especially, the QFI at $\theta=0$ satisfies 
  \begin{math}
   [J(0)]_{jk} = (4/d)\tau^2 \delta_{jk},
  \end{math}
  and hence reaches the upper bound in \eqref{QFI-upper-bound} when the Hamiltonian model is spherical.

  As long as the QFI is concerned, larger $\tau$ seems to be better for the estimation.  In general, however, this is not the case.
  For example, suppose that the Hamiltonian model contains a Hamiltonian of the form $X=\ket{\psi}\bra{\psi} - \frac{1}{d}I$, where $\ket{\psi}$ is a normalized vector.  Then, two Hamiltonians $H_{\pm\theta} = \pm (\pi/\tau)X$ cannot be distinguished from each other, since they yield the same probe states.
  Such a situation can occur when the evolution time $\tau$ is larger than $O(1/E)$.
  
  When $\tau\le O(1/E)$, on the other hand, the QCR bound $\delta^2 \alt O(md/N\tau^2)$ can be nonasymptotically saturated.
  To see this, one projects the probe state $\ket{q_\theta}$ to the $(m+1)$-dimensional Hilbert space spanned by
  \begin{math}
   \ket{\Phi}, X_1\ket{\Phi},\dotsc, X_m\ket{\Phi}.
  \end{math}
  Although this projection involves a certain postselection, the probability of failure is at most $O(1)$ and contributes only to a constant factor.
  After the projection, one conducts quantum tomography on $N$ copies of the postselected state $\ket{\bar q_\theta}$.
  The efficiency of the tomography can be computed from the following quantity:
  \begin{align}
   I_\delta &= \inf \bigl\{ I(\bar q_{\theta'},\bar q_\theta) \mathbin{\big|}
   \nm{\theta}, \nm{\theta'}\le E, \nm{\theta'-\theta}\ge \delta \bigr\},
   \label{resolution-defn}
  \end{align}
  where $I(q',q) = \sqrt{1 - \abs[\big]{\braket{q}{q'}}^2}$ forms a distance between $\ket{q}$ and $\ket{q'}$, which is often referred to as the infidelity.
  In this article, we call the quantity $I_\delta$ as the \emph{$\delta$-resolution}.
  
  We suppose that $N$ copies of postselected states are given.
  According to a study on the pure-state quantum tomography, the $(m+1)$-dimensional quantum state can be estimated such that the expected square infidelity is $\frac{m}{N+m}$~\cite{Hayashi2005}.
  Hence $N$ needs to be so large that $I_\delta^2 = O(m/N)$ holds.  The saturation of the trade-off relation~\eqref{QC-tradeoff} then reduces to
  \begin{equation}
   I_\delta^2 = O(\tau^2\delta^{2}/d), \label{delta-resolution-bound}
  \end{equation}  
  which corresponds to the bound on the QFI in \eqref{QFI-upper-bound}. 
  As we show in Appendix~\ref{ss:proof-1ch-resolution}, this condition is satisfied when $\tau E$ is small but stays at $O(1)$ with respect to $m$ and $d$.
  Since one needs $N=O(mdE^2/\delta^2)$ copies of probe states for this case, the time resource consumed by this procedure amounts to
  \begin{equation}
   T=N\tau= O(mdE/\delta^2).  \label{Time-1ch-bound}
  \end{equation}

  \subsection{The one-channel scheme with adaptive feedback}
  \label{ss:1A-scheme}
  As long as the error tolerance $\delta$ is concerned, the last procedure is analogous to the classical statistics: $T \propto 1/\delta^2$.  We would like to improve this procedure to $T \propto 1/\delta$ by means of quantum enhancement.

  Here, we introduce the adaptive feedback control~\cite{Yuan2015}.  One simulates application of the external field $V=-H_{\theta^*}$, where $\theta^*$ is an estimated value of $\theta$ estimated from the preceding measurements.
  We can use the Suzuki-Trotter decomposition~\cite{Suzuki1976} to approximate this external field with sufficiently many unitary operations.

  Since the system is driven with the Hamiltonian $H=H_\theta - H_{\theta^*} = H_{\theta-\theta^*}$, the parameter $\theta$ would be effectively replaced by $\theta-\theta^*$.
  Moreover, suppose that the estimator $\theta^*$ satisfies $\nm{\theta-\theta^*} \le E'$ with high probability; then the search radius $E$ would also be replaced by an effective radius $E'$.
  \begin{figure*}[tb]
   \includegraphics{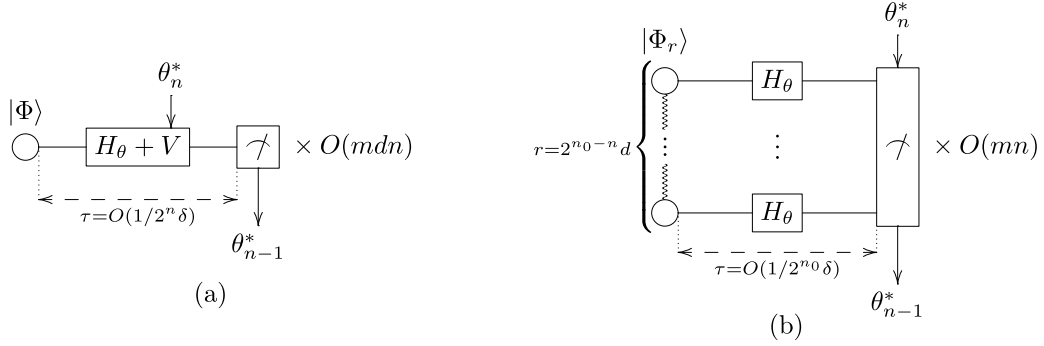}
   \caption{\label{fig:efficient-schemes}%
   Schematic diagrams of
   (a) the one-channel (sequential) procedure with adaptive feedback and
   (b) the many-channel (parallel) procedure.
   The preceding estimator $\theta_n^*$ is used in different places: in (a), it appears in the time evolution; in (b), it is used only in the measurement process.
   Due to this difference, (a) involves adaptive feedback and (b) does not.
   }
  \end{figure*}

  First, we fix a sufficiently large integer $n_0$ such that $E \le 2^{n_0}\delta$.  One computes a sequence of estimators $\theta_{n_0}^*, \theta_{n_0-1}^*, \dotsc, \theta_{1}^*, \theta_{0}^*$ in this order, starting with $\theta_{n_0}=0$.
  The estimators are required to satisfy the following condition: 
  \begin{equation}
   \mathbb{P}\bigl[\nm{\theta^*_n-\theta} \le 2^n\delta \bigr] \ge 1-p_\mathrm{crit}/2^n
    \label{success-n-crit}
  \end{equation}
  for any $\nm{\theta}\le E$.
  The condition is obviously met for $n=n_0$, and is recursively satisfied for $n=n_0, \ldots, 1$ if the conditional probability
  \begin{equation}
  \mathbb{P}\Bigl[\nm{\theta^*_{n-1}-\theta} \le 2^{n-1}\delta \mathbin{\Big|}
   \nm{\theta^*_{n}-\theta} \le 2^{n}\delta \Bigr]
  \end{equation}
  is no less than $1-p_\mathrm{crit}/2^n$.
  In Fig.~\ref{fig:efficient-schemes}~(a) we show how the adaptive feedback works.
  \begin{thm}
   The time resource required for the above procedure is $T=O(md/\delta)$.
  \end{thm}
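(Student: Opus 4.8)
The plan is to realise each of the $n_0$ recursive steps of the procedure by invoking the one-channel scheme of Sec.~\ref{ss:1chScheme}, applied to a \emph{rescaled} estimation problem, and then to show that the resource spent in the $n$-th step decays geometrically in $n$, so that the total sums to $O(md/\delta)$ irrespective of $n_0$ (hence of $E$).

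Fix $n_0$ with $E\le 2^{n_0}\delta$ and put $\theta^*_{n_0}=0$, so that \eqref{success-n-crit} holds trivially at $n=n_0$. Assume inductively that the conditioning event $\nm{\theta^*_n-\theta}\le 2^n\delta$ of \eqref{success-n-crit} occurs. In the $n$-th step one applies the feedback $V=-H_{\theta^*_n}$, simulated to arbitrary accuracy by a finite Suzuki--Trotter product of $\theta$-independent unitaries interleaved with the free evolution; being $\theta$-independent, these add nothing to the $H_\theta$-evolution time, and the Trotter error is absorbed into constants. A fresh maximally entangled state \eqref{MES-defn} evolved for a time $\tau_n$ under the feedback-modified dynamics is then exactly the one-channel probe state $\bigl(e^{-i\tau_n H_{\theta-\theta^*_n}}\otimes I_\mathrm{A}\bigr)\ket{\Phi}$ for the shifted parameter $\theta-\theta^*_n$, which still lies in the linear Hamiltonian model and now carries effective search radius $E_n:=2^n\delta$. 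One chooses $\tau_n=\Theta(1/E_n)$ so that $\tau_n E_n$ stays $O(1)$ with respect to $m$ and $d$; then Sec.~\ref{ss:1chScheme} together with Appendix~\ref{ss:proof-1ch-resolution} gives the $\delta$-resolution bound \eqref{delta-resolution-bound} for the target tolerance $\delta_{n-1}:=2^{n-1}\delta$ (the ratio $\delta_{n-1}/E_n$ being a fixed constant), so that $O(md\,E_n^2/\delta_{n-1}^2)=O(md)$ copies suffice to produce an estimator whose error exceeds $\delta_{n-1}$ with probability at most $p_\mathrm{crit}$. Boosting the conditional success probability from $1-p_\mathrm{crit}$ to the value $1-p_\mathrm{crit}/2^n$ demanded by \eqref{success-n-crit} costs only a logarithmic overhead~\cite{Chernoff1952}, i.e.\ $N_n=O(md\,n)$ copies (run $O(n)$ independent instances and return an estimate close to the majority of them).

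Collecting the cost: the $n$-th step uses $N_n=O(md\,n)$ copies, each requiring $H_\theta$-evolution for a time $\tau_n=O(2^{-n}/\delta)$, so $T_n=N_n\tau_n=O\bigl(md\,n\,2^{-n}/\delta\bigr)$. Summing over $n=n_0,n_0-1,\dots,1$,
\[
 T=\sum_{n=1}^{n_0}T_n\le \frac{C\,md}{\delta}\sum_{n=1}^{\infty}n\,2^{-n}=\frac{2C\,md}{\delta}=O(md/\delta),
\]
independently of $n_0$ and therefore of the search radius $E$. For correctness one chains the events of \eqref{success-n-crit}: iterating the conditional guarantees from $n=n_0$ down to $n=1$ and using the union bound, the final estimator $\theta^*_0$ obeys $\mathbb{P}\bigl[\nm{\theta^*_0-\theta}>\delta\bigr]\le\sum_{n=1}^{n_0}p_\mathrm{crit}/2^n<p_\mathrm{crit}$, which is \eqref{success-crit}.

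The main obstacle is not the geometric bookkeeping but confirming that the one-channel scheme can be invoked at \emph{every} scale with constants that do not depend on $n$: one must check that shrinking the effective search radius to $E_n$ while keeping $\tau_n E_n=\Theta(1)$ leaves both the $\delta$-resolution estimate \eqref{delta-resolution-bound} and the postselection success probability controlled by absolute constants independent of $n$, $m$ and $d$, which is precisely what Appendix~\ref{ss:proof-1ch-resolution} is designed to supply; only then is the per-step copy count genuinely $O(md)$ rather than something growing with $n$, and only then does the series for $T$ converge to $O(md/\delta)$ instead of picking up an extra $\log(E/\delta)$ or a power of $E$. A secondary subtlety is that every estimate produced in step $n$ is valid only conditionally on the success of step $n+1$, so the conditioning must be kept explicit throughout the recursion for the union bound to be legitimate.
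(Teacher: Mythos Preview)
Your proof is correct and follows essentially the same route as the paper: invoke the one-channel scheme of Sec.~\ref{ss:1chScheme} at each level with the rescaled pair $(E_n,\delta_{n-1})=(2^n\delta,2^{n-1}\delta)$, pay an $O(n)$ Chernoff overhead to hit the tightened failure rate $p_\mathrm{crit}/2^n$, and sum $T_n=O(mnd/2^n\delta)$ to $O(md/\delta)$. Your additional remarks on the Trotterised feedback, the scale-uniformity of the constants from Appendix~\ref{ss:proof-1ch-resolution}, and the explicit union bound for the chained conditioning are useful elaborations but do not depart from the paper's argument.
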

  \begin{proof}
   The required time resource for the estimator $\theta_{n-1}^*$ can be computed in the same way as in Sec.~\ref{ss:1chScheme}, with $E$ and $\delta$ replaced by $E'=2^n\delta$ and $E'/2 = 2^{n-1}\delta$, respectively.
   In addition, we need to take into account the fact that the critical rate $p_\mathrm{crit}/2^n$ becomes exponentially small with increasing $n$.  By the Chernoff bound, this requires $O(n)$ times more probe states than the case with the critical rate $p_\mathrm{crit}$.
   As a result, we obtain
   \begin{math}
    T_n = O(n) O\bigl(mdE'(E'/2)^{-2}\bigr) =  O(mnd/2^n\delta)
   \end{math}
   for $n\ge 1$.
   Since both $T_1$ and $\sum_{n=1}^{\infty} T_n$ are of the order of $O(md/\delta)$, the total time resource is also
   \begin{math}
    T = T_{n_0} + \dotsb + T_1 = O(md/\delta).
   \end{math}
  \end{proof}

  This time scale can be obtained from \eqref{Time-1ch-bound} by setting $E$ to be comparable with $\delta$.
  We see that this result is independent of the initial search radius $E$.
  In fact, the time resource is consumed mostly in the regime $E\sim \delta$, since the estimation proceeds fast when the energy scale is large.

  \subsection{The many-channel scheme}
  Next, we consider another procedure with a sufficiently large number of channels, but without feedback control.

  The many-channel procedure that we present here can be regarded as a modified version of the one-channel procedure with adaptive feedback.
  Again, one takes a positive integer $n_0$ such that $E \le 2^{n_0}\delta$ and computes a sequence of estimators $\theta_{n_0}^*,\dotsc, \theta_1^*, \theta_0^*$ in a row.
  The schematic diagram is shown in Fig.~\ref{fig:efficient-schemes}~(b).
  The probe state for the estimator $\theta_{n-1}^*$ is entangled between $r=2^{n_0-n}d$ channels under the evolution during the time interval $\tau=O(1/2^{n_0}\delta)$.

  We note that the evolution is \emph{not} adaptive, and that the time $\tau$ is independent of $n$.
  Hence the time evolution for all $n$ can be conducted in parallel, which requires as many channels as $R = O(Ed/\delta)$.
  In contrast, the measurement of the probe states for $\theta_{n-1}^*$ depends on the preceding estimator $\theta_{n}^*$.

  In the following, we will show that $O(mn)$ copies of probe states are required for the estimator $\theta_{n-1}^*$.
  Given this statement is true, the total time resource is $T_n = O(mnd/2^n\delta)$, leading to the same result as the one-channel adaptive procedure.
  
  Following Imai and Fujiwara~\cite{Imai2007}, we take the symmetric subspace $\mathcal{H}_{\mathrm{D},r} = \bigotimes^{r}_\mathrm{sym} \mathcal{H}_\mathrm{D}$ of the tensor product space $\mathcal{H}_\mathrm{D}^{\otimes r}$.
  One begins with the MES $\ket{\Phi_r}$ associated with the Hilbert space $\mathcal{H}_{\mathrm{D},r}$.
  The probe state $\ket{q_\theta}$ can be written as
  \begin{equation}
   \ket{q_\theta} = (e^{-i\tau H_\theta})^{\otimes r}\ket{\Phi_r},
  \end{equation}
  where we have omitted the identity operator on the ancillary Hilbert space.
  
  Now we assume $\nm{\theta - \theta_n^*} \le E' := 2^n\delta$, and attempt to construct the next estimator within error $E'/2$.
  On the measurement, the quantum state $\ket{q_\theta}$ is first transformed by the unitary matrix $(e^{i\tau H^*})^{\otimes r}$ with $H^* = H_{\theta_n^*}$.
  The resultant quantum state is
  \begin{align}
   \ket{q_\theta'} &= (e^{i\tau H^*})^{\otimes r}(e^{-i\tau H_\theta})^{\otimes r}\ket{\Phi_r} \notag \\
    &= (e^{i\tau H^*}e^{-i\tau H_\theta})^{\otimes r}\ket{\Phi_r}.
  \end{align}
  By the Magnus expansion, we have the operator $M_\theta = M_\theta(\tau)\in \mathrm{su}(d)$ satisfying 
  \begin{math}
   e^{-i\tau M_\theta} = e^{i\tau H^*}e^{-i\tau H_\theta}.
  \end{math}
  The quantum state $\ket{q_\theta'}$ can then be written as
  \begin{equation}
   \ket{q_\theta'} = (e^{-i\tau M_\theta})^{\otimes r}\ket{\Phi_r}
    = e^{-i\tau \{M_\theta\}_r}\ket{\Phi_r}.
  \end{equation}
  Here we denote by $\{A\}_r$ the $r$-fold collective operator for $A$; it can be defined as $\{A\}_r = P\sum_{j=1}^r A^{(j)}$, where $A^{(j)}$ is the operator $A$ acting on the $j$th Hilbert space and $P$ the projection onto $\mathcal{H}_{\mathrm{D},r}$.
  We approximate this state as
  \begin{align}
    \ket{q_\theta'} &\approx (I-i\tau\{M_\theta\}_r)\ket{\Phi_r} \notag \\
    &\approx (I-i\tau\{H_\theta-H^*\}_r)\ket{\Phi_r}, \label{operator-approx}
  \end{align}
  so that we may regard $\ket{q_\theta'}$ to be in an $(m+1)$-dimensional subspace spanned by
  \begin{equation}
   \ket{\Phi_r},\, \{X_1\}_r\ket{\Phi_r},\ldots,\{X_m\}_r\ket{\Phi_r}.
  \end{equation}
  Therefore, we obtain an $(m+1)$-dimensional state $\ket{\bar q_\theta}$ after projecting $\ket{q_\theta'}$ onto this subspace.
  The $\delta$-resolution of $\ket{\bar q_\theta}$ defined in \eqref{resolution-defn} is now given as
  \begin{equation}
   I_{E'/2} = O(\tau E) + O(\tau^2 E^2), \label{resolution-rch}
  \end{equation}
  Therefore, when $\tau$ is sufficiently small but of the order of $O(1/E)$, $O(mn)$ copies of quantum states are sufficient.

  In Appendix~\ref{ss:proof-rch-resol}, we present the proof of \eqref{resolution-rch}.
  We emphasize that the approximation in \eqref{operator-approx} is valid only when $\tau E'\le O(d/r)$ holds, which essentially determines the number of necessary channels.

 We thus find that, when there are as many channels as $R=O(Ed/\delta)$, we can achieve the upper bound $T=O(md/\delta)$ by entanglement and without feedback control.
 The procedure does not improve any longer by further adaptive feedback, since it already simulates adaptive feedback control by adaptive measurement.
 It can also be inferred that we need adaptive feedback control with fewer than $R$ channels, since the initial search radius $E$ is too large for this case. 
 When more than $R$ channels are available, on the other hand, we can expand the search radius without changing the amount of time resource.


 \section{Comparison with previous results}
 \label{sec:4.5}
 Thus far we showed that, in the case of a spherical Hamiltonian model, the bounds are given as
 \begin{equation}
  O\bigl((m d)^{1/2}/\delta\bigr)\le T\le O(m d/\delta),
   \label{Time-bounds}
 \end{equation}
 where $\delta$ is the tolerated error in the estimation.
 In this section, we compare the bounds in \eqref{Time-bounds} with the existing results in quantum metrology.
 We see that the presented metrological bounds are consistent with the previous results and that they give more general insight into the theory.
 \subsection{Finite-dimensional metrology}
  As long as a fixed system is concerned, we can set $m$ and $d$ to be of the order of unity.  In this case, the bounds simply reduce to $T=O(1/\delta)$, which is the original Heisenberg limit.
 \subsection{Hamiltonian tomography}
  Estimation of an arbitrary Hamiltonian on $\mathbb{C}^n$, or the Hamiltonian tomography, is treated by the full model with $m=d^2-1$.
  The metrological bounds are therefore $O(d^{2/3}/\delta)\le T\le O(d^3/\delta)$.

  Reference~\cite{Yuan2016} gives the QCR bound $\delta^2 \ge O(d^3/\tau^2)$ for one probe state, which corresponds to the lower bound.
  If we consider the QCR bound $\delta^2 \ge O(d^3/N\tau^2)$ for $N$ probe states and regard the restriction on the evolution time $\tau \le O(1/\delta)$, we obtain $T=N\tau \ge O(d^2/\delta)$, the saturation of which corresponds to the upper bound.
 \subsection{Multiple phase estimation}
  The phase estimation model in $\mathbb{C}^d$ is generated by $m=d-1$ diagonal matrices, from which the bounds are $O(d/\delta)\le T\le O(d^2/\delta)$.
  In a practical situation, however, the parameters are phase shifts of $d-1$ independent modes relative to a reference mode $\ket{0}$ \cite{Macchiavello2003,Humphreys2013}.  This assumes the generators of the form
  \begin{equation}
   Z_j = \ket{j}\bra{j} - \ket{0}\bra{0} \quad (j=1,\dotsc,d-1).
  \end{equation}
  Since these generators are not orthonormal in $\mathrm{su}(d)$, the error $\delta'$ in the $Z$ basis is generally different from the error $\delta$ in the orthonormal basis.
  They can be related as $\delta \le \delta' \le d\delta$, where the factor $d$ comes from the fact that the $Z$ basis uses the reference mode $O(d)$ times more than the others.
  As a result, the metrological bounds change into $O(d/\delta') \le T\le O(d^3/\delta')$.
  The QCR bound corresponding to this lower bound is also seen in Ref.~\cite{Humphreys2013}.
 \subsection{Few-parameter estimation}
  \label{ss:few-param}
  When we consider a spherical model with $m=O(1)$ constant with respect to $d$, the lower bound $T=O(d^{1/2}/\delta)$ can be saturated.
  In fact, the operator norm of $H_\theta$ with $\nm{\theta}\le E$ is at most $Em^{1/2}/d^{1/2}$ since
  \begin{equation}
   \nm{H_\theta}^2 = \nm{H_\theta^2} \le E^2\nm{\mathbf{X}} = E^2(m/d).
  \end{equation}
  Therefore, the evolution time in the one-channel scheme (Sec.~\ref{ss:1chScheme}) can be set to be $\tau=O\bigl(d^{1/2}/m^{1/2}E\bigr)$, with which the number of probe states reduces to $N = O(mE/\delta^2)$.
  Therefore, the required time for the one-channel scheme is
  $T=N\tau=O(m^{3/2}d^{1/2}E/\delta^2)$, which reduces to $T=O(m^{3/2}d^{1/2}/\delta)$ by using adaptive feedback.
  This new upper bound is tighter than the general one, $T=O(md/\delta)$, when $m$ is smaller than $O(d)$.  Especially in the case $m=O(1)$, this upper bound $T=O(\sqrt{d}/\delta)$ is found to coincide with the lower bound.
 \section{Conclusion}
 \label{sec:5}
 In this paper, 
we have investigated the estimation of an $m$-parameter Hamiltonian in a $d$-level system, and derived rigorous upper and lower bounds \eqref{Time-bounds} on the time resource $T$.
 We note that it is possible to calculate the model-independent constant factors.

 The upper bound in \eqref{Time-bounds} is consistent with Yuan's result on the full model~\cite{Yuan2016} and the multiple phase model \cite{Ballester2004,Humphreys2013} where the evolution time $\tau$ is set to be $O(1)$.
 We present two procedures to achieve this upper bound: the one-channel procedure with adaptive feedback and the many-channel procedure without it.
 This result is different from the work by Yuan, where the former scheme is claimed to outperform the latter scheme by a factor of $O(d)$.
 The difference arises from the precondition that the search radius $E$ and the tolerated error $\delta$ are given independently of the number of channels.

 The lower bound in \eqref{Time-bounds} is by far smaller than the upper bound, and no concrete procedure corresponding to this lower bound has been found except for the case of $m=O(1)$.
 This bound is also related to Grover's algorithm, which requires an $O(d^{1/2})$ time for the estimation of a discrete parameter in a $d$-level Hamiltonian.
 In fact, the derivation of this bound is closely related to that of the optimal time in Grover's search problem~\cite{Farhi1998,Demkowicz-Dobrzanski2015}.
 It remains an open question whether any actual procedure can achieve $T=O\bigl((md)^{1/2}/\delta\bigr)$ because no corresponding procedure has been constructed.
 
 \subsection*{ACKNOWLEDGEMENT}
 We gratefully acknowledge Tomohiro Shitara for a number of critical comments.
 This work was supported by KAKENHI Grant No.~26287088 from the Japan Society for the Promotion of Science (JSPS), a Grant-in-Aid for Scientific Research on Innovative Areas ``Topological Materials Science'' (KAKENHI Grant No.~15H05855) from the JSPS, and the Photon Frontier Network Program from MEXT of Japan.
 N.~K. acknowledges support from the Advanced Leading Graduate Course for Photon Science (ALPS) of JSPS.

 \appendix
 \section{The proof of Theorem~\ref{th:QFI-time-bound}}
 \label{ss:proof-QFI-bound}
 Let us simply write $\ket{q_\theta}$ for $\ket{q_\theta(t)}$, and define $F_j$ and $G_j$ as
 \begin{align}
  F_j &:= \frac{1}{4}[J(\theta,t)]_{jj} =
  \bra{\partial_jq_\theta}\bigl[1-\ket{q_\theta}\bra{q_\theta}\bigr]\ket{\partial_jq_\theta}, \notag \\
  G_j &:= [G(\theta,t)]_{jj} = \bracket{q_\theta}{X_j^2}{q_\theta}.
 \end{align} 
 The time evolutions of $\ket{q_\theta}$ and $\ket{\partial_jq_\theta}$ by the Hamiltonian $H_\theta$ are governed by
 \begin{align}
   i\frac{\partial}{\partial t}\ket{q_\theta}
   &= H_\theta\ket{q_\theta}, \\
   i\frac{\partial}{\partial t}\ket{\partial_jq_\theta}
   &= \partial_j\bigl(H_\theta\ket{q_\theta}\bigr)
   = X_j\ket{q_\theta}+H_\theta\ket{\partial_jq_\theta}.
\end{align}
 Hence the term involving $H_\theta$ is canceled upon differentiation $\tfrac{\partial F_j}{\partial t}$:
 \begin{align}
  \frac{\partial F_j}{\partial t} &=
   2\Im \bra{\partial_j q_\theta}\bigl[1-\ket{q_\theta}\bra{q_\theta}\bigr]X_j\ket{q_\theta} \\
  &\le 2\nm[\big]{\bigl[1-\ket{q_\theta}\bra{q_\theta}\bigr]\ket{\partial_j q_\theta}}  \nm[\big]{X_j\ket{q_\theta}}= 2(F_j G_j)^{1/2},  \notag
 \end{align}
 where the inequality follows from the Schwartz inequality.
 If we employ the Schwartz inequality again, we obtain
 \begin{align}
  \frac{\partial}{\partial t}\sum_{j=1}^m F_j \le 2\sum_{j=1}^m (F_j G_j)^{1/2}
  &\le 2\biggl(\sum_{j=1}^m F_j\sum_{j'=1}^m G_{j'}\biggr)^{1/2} \notag \\
  \implies \frac{\partial}{\partial t}\biggl(\sum_{j=1}^m F_j\biggr)^{1/2}
  &\le \biggl(\sum_{j=1}^m G_j\biggr)^{1/2},
 \end{align}
 the latter of which is equivalent to the claim.
 \section{A uniform Cramer-Rao bound for a biased estimator}
 \label{ss:UCRB}
 The bias of an estimator $\theta^*$ is defined as
 \begin{math}
  b(\theta) = \mathbb{E}_\theta[\theta^*]-\theta,
 \end{math}
 where $\mathbb{E}_\theta$ denotes the expectation value with respect to the true parameter $\theta$.
 The Cram\'er-Rao inequality \eqref{QCR-inequality} assumes that the estimator is unbiased, namely $b(\theta)=0$.
 In a general situation, the inequality must be modified as~\cite{VanTrees2004}
 \begin{equation}
  V(\theta) \ge N^{-1}\bigl[I+D(\theta)\bigr]J(\theta)^{-1}\bigl[I+D(\theta)\bigr]^T,
   \label{biased-QCRI}
 \end{equation}
 where $D(\theta)$ is a matrix defined as $[D(\theta)]_{jk} = \partial_j b_k(\theta)$.
 By this inequality, the evaluation of the variance in \eqref{cov-lower-bound} can be modified as 
 \begin{align}
  \Tr [V(\theta)] &\ge \frac{\Tr \bigl[(I+D(\theta)\bigr)J(\theta)^{-1}\bigl(I+D(\theta)\bigr)^T\bigr]}{N} \notag \\
   &\ge \frac{\bigl(\Tr [I+D(\theta)]\bigr)^2}{N\Tr [J(\theta)]}. \label{biased-cov-eval}
 \end{align}

 Although a nonzero bias may decrease the variance $\Tr [V(\theta)]$, it may increase the total error as well: $\delta^2 \sim \Tr [V(\theta)] + \nm{b(\theta)}^2$.
 If we assume $p_\mathrm{crit}=0.05$ and $\delta/E \le 1/5$, for simplicity, $\Tr [V(\theta)] + \nm{b(\theta)}^2 \le 6\delta^2$ is necessary for a successful estimator.
 Therefore the bias is under constraint $\nm{b(\theta)} \le \sqrt{6}\delta$ for all $\theta$ within radius $E$.

 We would like to show that, under this constraint, $\Tr[I+D(\theta)] > \frac{m}{2}$ holds for some $\theta$.
 This will lead to the conclusion that the inequality~\eqref{cov-lower-bound} is modified only by the constant factor of $\frac{1}{4}$ by introducing a biased estimator.

 Let us assume the contrary, that is, $\Tr[I+D(\theta)] \le \frac{m}{2}$ for all $\theta$.
 This assumption can be rewritten as $\nabla\cdot b(\theta) = \Tr D(\theta) \le -\frac{m}{2}$, which upon integration becomes
 \begin{equation}
  \int_{\nm{\theta}\le 5\delta} \nabla\cdot b(\theta)d\theta \le -\frac{m}{2}\Pi_m(5\delta)^m.
  \label{div-theorem-1}
 \end{equation}
 where we denote by $\Pi_m$ the volume of a unit ball in $\mathbb{R}^m$.
 On the other hand, by the constraint $\nm{b(\theta)} \le \sqrt{6}\delta$, we have
 \begin{equation}
  \abs[\bigg]{\int_{\nm{\theta}=5\delta} b(\theta)\cdot dn(\theta)} \le \sqrt{6}\delta m\Pi_m (5\delta)^{m-1}.
  \label{div-theorem-2}
 \end{equation}
 Now, the integrals in \eqref{div-theorem-1} and \eqref{div-theorem-2} are equal by the divergence theorem.  This leads to the contradiction since $\frac{5}{2}>\sqrt{6}$.

 \section{A rigorous evaluation of the \texorpdfstring{$\delta$}{delta}-resolution}
 \label{ss:proof-1ch-resolution}
 We show that the $\delta$-resolution defined in \eqref{resolution-defn} satisfies $I_\delta^2=O(\tau^2 \delta^2/d)$.
 First, we examine the full model, where we do not perform the postselection.
 We take two candidates $\theta$ and $\theta'$, which satisfy the condition in \eqref{resolution-defn}.

 It is straightforward that the MES defined by \eqref{MES-defn} satisfies
 \begin{math}
  \bra{\Phi}A\ket{\Phi} = \frac{1}{d}\Tr A
 \end{math}
 for an arbitrary operator $A$ on $\mathcal{H}_\mathrm{D}$.
 Hence the infidelity between $\ket{q}$ and $\ket{q'}$ can be described as
 \begin{equation}
  1 - \abs[\big]{\braket{q_{\theta'}}{q_{\theta}}}^2 = 1 - \frac{1}{d^2}\abs[\Big]{\Tr e^{i\tau H_{\theta'}}e^{-i\tau H_{\theta}}}^2.
   \label{eval-infid-1}
 \end{equation}
 It is known in the context of Loschmidt echo that, for sufficiently small $\tau$, the right-hand side of \eqref{eval-infid-1} can be approximated by $\frac{\tau^2}{2d}\Tr (H_{\theta'}-H_{\theta})^2 = \frac{\tau^2}{2d}\nm{\theta'-\theta}^2$~\cite{Wisniacki2003}.
 Therefore, the estimation $I_\delta^2=O(\tau^2 \delta^2/d)$ is correct as long as this short-term approximation is valid for $\tau=O(1/E)$.

 We recall that the postselection subspace is spanned by $\ket{\Phi}, X_1,\ket{\Phi},\dotsc, X_m\ket{\Phi}$.  We denote by $P$ the projection operator onto this subspace.
 The infidelity between two postselected space $\ket{\bar q}, \ket{\bar q'}$ can be written as
 \begin{align}
  1 - &\abs[\big]{\braket{\bar q_{\theta}}{\bar q_{\theta'}}}^2
  = \frac{
    \bracket{q_\theta}{P}{q_{\theta}}\bracket{q_{\theta'}}P{q_{\theta'}} -
    \abs[\big]{ \bracket {q_{\theta}}{P}{q_{\theta'}} }^2
  }{
    \bracket{q_{\theta}}{P}{q_{\theta}} \bracket{q_{\theta'}}P{q_{\theta'}}
  } \notag \\
  &\ge \bracket{q_\theta}{P}{q_{\theta}}\bracket{q_{\theta'}}P{q_{\theta'}} -
       \abs[\big]{ \bracket {q_{\theta}}{P}{q_{\theta'}} }^2 \notag \\
  &= \bracket{q_\theta}{P}{q_{\theta}}\bracket{\Delta q}P{\Delta q} -
     \abs[\big]{\bracket {q_{\theta}}P{\Delta q}}^2, \label{eval-infid-2}
 \end{align}
 where $\ket{\Delta q}=\ket{q_{\theta'}}-\ket{q_\theta}$.
 Therefore, it suffices to show that $\bracket{\Delta q}P{\Delta q}$ is at least $O(\tau^2\delta^2/d)$, while the last term in $\abs[\big]{\bracket {q_{\theta}}P{q_{\theta'}}}^2$ is negligible.

 First, we note that the equality
 \begin{math}
  \bra{\Phi}A\ket{\Phi} = (\Tr A)/d
 \end{math}
 can be applied only when $A$ belongs to $L(\mathcal{H}_\mathrm{D})$, the operator space on $\mathcal{H}_\mathrm{D}$.
 Since $P$ is a projection operator on $\mathcal{H}_\mathrm{D}\otimes \mathcal{H}_\mathrm{A}$, we need a special treatment with the postselection.
 A superoperator $\mathcal{S}$ on $L(\mathcal{H}_\mathrm{D})$ is defined as
 \begin{gather}
  \mathcal{S}(A) = \frac{1}{d}(\Tr A)I + \sum_{j=1}^m (\Tr AX_j)X_j.
 \end{gather}
 This superoperator is a projection operator on $L(\mathcal{H}_\mathrm{D})$ equipped with the HS inner product.
 Then $PA\ket{\Phi} = \mathcal{S}(A)\ket{\Phi}$ holds for an arbitrary $A\in L(\mathcal{H}_\mathrm{D})$, which is a great convenience.

 Next, we define an operator $B=e^{-i\tau H'}e^{i\tau H}-I$ such that $\ket{\Delta q}=Be^{-i\tau H}\ket{\Phi}$.  Then, it follows from the Dyson expansion that
 \begin{equation}
  B = -i\int_0^\tau ds e^{-itH}(H'-H)e^{itH} + O(\tau^2\delta^2),
 \end{equation}
 where the residual term is measured by the trace norm.
 The Taylor expansion along with $\mathcal{S}$ and $B$ defined above yields
  \begin{align}
   \bracket{q_\theta}{P}{\Delta q}
   &= \bra{\Phi}\mathcal{S}(e^{i\tau H})B e^{-i\tau H}\ket{\Phi} \notag \\
   &= \frac{1}{d}\Tr [e^{-i\tau H}\mathcal{S}(e^{i\tau H})B] \notag \\
   &\le O(\tau^2E \delta/d), \\
   \bracket{\Delta q}{P}{\Delta q}
   &= \bracket{\Phi}{\mathcal{S}(Be^{-i\tau H})^\dagger \mathcal{S}(Be^{-i\tau H})}{\Phi} \notag \\
   &= \frac{1}{d}\nm[\big]{\mathcal{S}(Be^{-i\tau H})}_\mathrm{HS}^2
   \le \frac{1}{d}\nm{Be^{-i\tau H}}_\mathrm{HS}^2 \notag \\
   &\le (\tau^2\delta^2/d)\bigl(1 + O(\tau E)\bigr),
  \end{align}
which completes the evaluation of $I_\delta^2$.
We note that $\nm{X}_\mathrm{HS} = \Tr \abs{X}^2$ denotes the HS norm.

 \section{The derivation of (\ref{resolution-rch})}
 \label{ss:proof-rch-resol}
 In the proof of \eqref{resolution-rch}, we need to evaluate the approximation~\eqref{operator-approx}.
 First, the operator $M_\theta$ in the Magnus expansion can be approximated by $H_\theta' := H_\theta - H^*$, given that both $\tau H^*$ and $\tau H_\theta'$ are small relative to unity.  In terms of the HS norm, this is expressed as
 \begin{equation}
  \nm[\big]{M_\theta - H_\theta'}_\mathrm{HS} = E' O(\tau E), \label{HS-distance-pro1}
 \end{equation}
 where we take into account $\nm{\tau H^*}_\mathrm{HS} \le \tau E = O(1)$ and $\nm{H_\theta'}_\mathrm{HS}\le \nm{H_{\theta-\theta^*}}_\mathrm{HS}\le E'$.

 To compute the distance between $\{M_\theta\}_r$ and $\{H_\theta'\}_r$, we introduce the following relation~\cite{Imai2007}: for an arbitrary $X\in \mathrm{su}(d)$,
 \begin{align}
  \frac{1}{D}\Tr (\{X\}_r)^2 &= F_2\Tr X^2,\quad F_2=\frac{r(d+r)}{d(d+1)}
  \label{collective2}
 \end{align}
 with $D=\dim \mathcal{H}_{\mathrm{D},r}=\frac{(r+d-1)!}{r!(d-1)!}$.
 This means that the HS norm of $\{X\}_r$ is $\sqrt{DF_2}$ times that of $X$, leading to the evaluation
 \begin{align}
  \nm[\big]{\tau\{M_\theta\}_r - \tau\{H_\theta'\}_r}_\mathrm{HS} &= \sqrt{DF_2}\tau E' O(\tau E) \notag \\
  &= \sqrt{D} O(\tau^2 E^2).
  \label{HS-distance-1}
 \end{align}
 Note that $F_2 = O\bigl((r/d)^2\bigr)$ for $r\ge d$ and that $E' = (r/d)E$.

 Next, we check the approximation of $e^{-i\tau \{M_\theta\}_r}$.
 Since $\abs{e^{-i\alpha} - 1 + i\alpha}^2 \le \frac{1}{4}\alpha^4$ holds for any real number $\alpha$, we have
 \begin{equation}
  \nm[\big]{e^{-i\tau \{M_\theta\}_r} - I + i\tau \{M_\theta\}_r}_\mathrm{HS}^2
   \le \frac{\tau^4}{4}\Tr (\{M_\theta\}_r)^4.  \label{HS-distance-pro2}
 \end{equation}
 The right-hand side of this inequality can be evaluated similarly to \eqref{collective2} as
  \begin{align} 
   \frac{1}{D}\Tr (\{X\}_r)^4 &= F_{4}\Tr X^4 + F_{22}(\Tr X^2)^2,
   \label{collective4}
  \end{align}
  where the coefficients
  $F_4 = \frac{r(r+d)(6r^2+6dr+d^2-d)}{d(d+1)(d+2)(d+3)}$ and
  $F_{22} = \frac{3r(r+d)(r-1)(d+r+1)}{d(d+1)(d+2)(d+3)}$ are
  both $O\bigl((\frac{r}{d})^4\bigr)$ for $r\ge d$.
  Since $\Tr (M_\theta)^2\le E'^2$ and $\Tr (M_\theta)^4\le E'^4$, we have
  \begin{align}
   \nm[\big]{e^{-i\tau \{M_\theta\}_r} - I + i\tau \{M_\theta\}_r}_\mathrm{HS}^2
    &\le D\tau^4 O\bigl( (r/d)^4 E'^4\bigr) \notag \\
    &= D O(\tau^4 E^4). \label{HS-distance-2}
  \end{align}
  Finally, we compute the distance between the vectors $\ket{q_\theta'} = e^{-i\tau \{M_\theta\}_r}\ket{\Phi_r}$ and $\ket{q_\theta''}=\bigl(I - i\tau \{H_\theta'\}_r\bigr)\ket{\Phi_r}$ as
  \begin{align}
   \nm[\big]{\ket{q_\theta''}-\ket{q_\theta'}}^2
   &= \frac{1}{D}\bra{\Phi_r}\bigl(e^{-i\tau \{M_\theta\}_r} - I + i\tau \{H_\theta'\}_r\bigr)^2\ket{\Phi_r} \notag \\
   &= \frac{1}{D}\nm[\big]{e^{-i\tau \{M_\theta\}_r} - I + i\tau \{H_\theta'\}_r}_\mathrm{HS}^2 \notag \\
   &\le O(\tau^4 E^4),
  \end{align}
  where the inequality is obtained by combining \eqref{HS-distance-1} with \eqref{HS-distance-2}.
  The probability of failure in the postselection of $\ket{q'_\theta}$ is therefore at most $O(\tau^2 E^2)$, because $\ket{q_\theta''}$ belongs to the target subspace.  This implies that the postselected state $\ket{\bar q_\theta}$ also satisfies $\nm[\big]{\ket{\bar q_\theta}-\ket{q_\theta''}} \le O(\tau^2 E^2)$.
  
  Finally, we consider the infidelity between $\ket{\bar q_\theta}$ and $\ket{\bar q_\eta}$ with $\nm{\theta-\eta} \ge E'/2$.
  For $\phi = \arg \braket{\bar q_\theta}{\bar q_\eta}$, we obtain
  \begin{equation}
   \begin{aligned}
    I(\bar q_\theta, \bar q_\eta)
    &\ge \frac{1}{\sqrt2}\nm[\big]{e^{i\phi}\ket{\bar q_\theta}-\ket{\bar q_\eta}} \\
    &= \frac{1}{\sqrt2}\nm[\big]{e^{i\phi}\ket{q_\theta''}-\ket{q_\eta''}} + O(\tau^2 E^2).
   \end{aligned}
  \end{equation}
  We can compute the distance by $\nm[\big]{e^{i\phi}\ket{q_\theta''}-\ket{q_\eta''}}$ by using the relation \eqref{collective2}, which turns out to be at least $O(\tau E)$.  Thus the derivation is completed.
%

\bibliography{library}

\begin{thebibliography}{45}
\expandafter\ifx\csname natexlab\endcsname\relax\def\natexlab#1{#1}\fi
\expandafter\ifx\csname bibnamefont\endcsname\relax
  \def\bibnamefont#1{#1}\fi
\expandafter\ifx\csname bibfnamefont\endcsname\relax
  \def\bibfnamefont#1{#1}\fi
\expandafter\ifx\csname citenamefont\endcsname\relax
  \def\citenamefont#1{#1}\fi
\expandafter\ifx\csname url\endcsname\relax
  \def\url#1{\texttt{#1}}\fi
\expandafter\ifx\csname urlprefix\endcsname\relax\def\urlprefix{URL }\fi
\providecommand{\bibinfo}[2]{#2}
\providecommand{\eprint}[2][]{\url{#2}}

\bibitem[{\citenamefont{Holevo}(1973)}]{Holevo1973}
\bibinfo{author}{\bibfnamefont{A.~S.} \bibnamefont{Holevo}},
  \bibinfo{journal}{Journal of Multivariate Analysis}
  \textbf{\bibinfo{volume}{3}}, \bibinfo{pages}{337} (\bibinfo{year}{1973}).

\bibitem[{\citenamefont{Helstrom}(1969)}]{Helstrom1969}
\bibinfo{author}{\bibfnamefont{C.~W.} \bibnamefont{Helstrom}},
  \bibinfo{journal}{Journal of Statistical Physics}
  \textbf{\bibinfo{volume}{1}}, \bibinfo{pages}{231} (\bibinfo{year}{1969}).

\bibitem[{\citenamefont{Gisin et~al.}(2002)\citenamefont{Gisin, Ribordy,
  Tittel, and Zbinden}}]{Gisin2002}
\bibinfo{author}{\bibfnamefont{N.}~\bibnamefont{Gisin}},
  \bibinfo{author}{\bibfnamefont{G.}~\bibnamefont{Ribordy}},
  \bibinfo{author}{\bibfnamefont{W.}~\bibnamefont{Tittel}}, \bibnamefont{and}
  \bibinfo{author}{\bibfnamefont{H.}~\bibnamefont{Zbinden}},
  \bibinfo{journal}{Reviews of Modern Physics} \textbf{\bibinfo{volume}{74}},
  \bibinfo{pages}{145} (\bibinfo{year}{2002}).

\bibitem[{\citenamefont{Galindo and Mart{\'{i}}n-Delgado}(2002)}]{Galindo2002}
\bibinfo{author}{\bibfnamefont{A.}~\bibnamefont{Galindo}} \bibnamefont{and}
  \bibinfo{author}{\bibfnamefont{M.~A.} \bibnamefont{Mart{\'{i}}n-Delgado}},
  \bibinfo{journal}{Reviews of Modern Physics} \textbf{\bibinfo{volume}{74}},
  \bibinfo{pages}{347} (\bibinfo{year}{2002}).

\bibitem[{\citenamefont{Peres and Terno}(2004)}]{Peres2004}
\bibinfo{author}{\bibfnamefont{A.}~\bibnamefont{Peres}} \bibnamefont{and}
  \bibinfo{author}{\bibfnamefont{D.~R.} \bibnamefont{Terno}},
  \bibinfo{journal}{Reviews of Modern Physics} \textbf{\bibinfo{volume}{76}},
  \bibinfo{pages}{93} (\bibinfo{year}{2004}).

\bibitem[{\citenamefont{Herrero-Collantes and
  Garcia-Escartin}(2017)}]{Herrero-Collantes2017}
\bibinfo{author}{\bibfnamefont{M.}~\bibnamefont{Herrero-Collantes}}
  \bibnamefont{and} \bibinfo{author}{\bibfnamefont{J.~C.}
  \bibnamefont{Garcia-Escartin}}, \bibinfo{journal}{Reviews of Modern Physics}
  \textbf{\bibinfo{volume}{89}}, \bibinfo{pages}{015004}
  (\bibinfo{year}{2017}).

\bibitem[{\citenamefont{Holland and Burnett}(1993)}]{Holland1993}
\bibinfo{author}{\bibfnamefont{M.~J.} \bibnamefont{Holland}} \bibnamefont{and}
  \bibinfo{author}{\bibfnamefont{K.}~\bibnamefont{Burnett}},
  \bibinfo{journal}{Physical Review Letters} \textbf{\bibinfo{volume}{71}},
  \bibinfo{pages}{1355} (\bibinfo{year}{1993}).

\bibitem[{\citenamefont{Giovannetti et~al.}(2004)\citenamefont{Giovannetti,
  Lloyd, and Maccone}}]{Giovannetti2004}
\bibinfo{author}{\bibfnamefont{V.}~\bibnamefont{Giovannetti}},
  \bibinfo{author}{\bibfnamefont{S.}~\bibnamefont{Lloyd}}, \bibnamefont{and}
  \bibinfo{author}{\bibfnamefont{L.}~\bibnamefont{Maccone}},
  \bibinfo{journal}{Science} \textbf{\bibinfo{volume}{306}},
  \bibinfo{pages}{1330} (\bibinfo{year}{2004}).

\bibitem[{\citenamefont{Higgins et~al.}(2007)\citenamefont{Higgins, Berry,
  Bartlett, Wiseman, and Pryde}}]{Higgins2007}
\bibinfo{author}{\bibfnamefont{B.~L.} \bibnamefont{Higgins}},
  \bibinfo{author}{\bibfnamefont{D.~W.} \bibnamefont{Berry}},
  \bibinfo{author}{\bibfnamefont{S.~D.} \bibnamefont{Bartlett}},
  \bibinfo{author}{\bibfnamefont{H.~M.} \bibnamefont{Wiseman}},
  \bibnamefont{and} \bibinfo{author}{\bibfnamefont{G.~J.} \bibnamefont{Pryde}},
  \bibinfo{journal}{Nature} \textbf{\bibinfo{volume}{450}},
  \bibinfo{pages}{393} (\bibinfo{year}{2007}).

\bibitem[{\citenamefont{Nagata et~al.}(2007)\citenamefont{Nagata, Okamoto,
  O'Brien, Sasaki, and Takeuchi}}]{Nagata2007}
\bibinfo{author}{\bibfnamefont{T.}~\bibnamefont{Nagata}},
  \bibinfo{author}{\bibfnamefont{R.}~\bibnamefont{Okamoto}},
  \bibinfo{author}{\bibfnamefont{J.~L.} \bibnamefont{O'Brien}},
  \bibinfo{author}{\bibfnamefont{K.}~\bibnamefont{Sasaki}}, \bibnamefont{and}
  \bibinfo{author}{\bibfnamefont{S.}~\bibnamefont{Takeuchi}},
  \bibinfo{journal}{Science} \textbf{\bibinfo{volume}{316}},
  \bibinfo{pages}{726} (\bibinfo{year}{2007}).

\bibitem[{\citenamefont{Higgins et~al.}(2009)\citenamefont{Higgins, Berry,
  Bartlett, Mitchell, Wiseman, and Pryde}}]{Higgins2009}
\bibinfo{author}{\bibfnamefont{B.~L.} \bibnamefont{Higgins}},
  \bibinfo{author}{\bibfnamefont{D.~W.} \bibnamefont{Berry}},
  \bibinfo{author}{\bibfnamefont{S.~D.} \bibnamefont{Bartlett}},
  \bibinfo{author}{\bibfnamefont{M.~W.} \bibnamefont{Mitchell}},
  \bibinfo{author}{\bibfnamefont{H.~M.} \bibnamefont{Wiseman}},
  \bibnamefont{and} \bibinfo{author}{\bibfnamefont{G.~J.} \bibnamefont{Pryde}},
  \bibinfo{journal}{New Journal of Physics} \textbf{\bibinfo{volume}{11}},
  \bibinfo{pages}{073023} (\bibinfo{year}{2009}).

\bibitem[{\citenamefont{Yonezawa et~al.}(2012)\citenamefont{Yonezawa, Nakane,
  Wheatley, Iwasawa, Takeda, Arao, Ohki, Tsumura, Berry, Ralph
  et~al.}}]{Yonezawa2012}
\bibinfo{author}{\bibfnamefont{H.}~\bibnamefont{Yonezawa}},
  \bibinfo{author}{\bibfnamefont{D.}~\bibnamefont{Nakane}},
  \bibinfo{author}{\bibfnamefont{T.~A.} \bibnamefont{Wheatley}},
  \bibinfo{author}{\bibfnamefont{K.}~\bibnamefont{Iwasawa}},
  \bibinfo{author}{\bibfnamefont{S.}~\bibnamefont{Takeda}},
  \bibinfo{author}{\bibfnamefont{H.}~\bibnamefont{Arao}},
  \bibinfo{author}{\bibfnamefont{K.}~\bibnamefont{Ohki}},
  \bibinfo{author}{\bibfnamefont{K.}~\bibnamefont{Tsumura}},
  \bibinfo{author}{\bibfnamefont{D.~W.} \bibnamefont{Berry}},
  \bibinfo{author}{\bibfnamefont{T.~C.} \bibnamefont{Ralph}},
  \bibnamefont{et~al.}, \bibinfo{journal}{Science}
  \textbf{\bibinfo{volume}{337}}, \bibinfo{pages}{1514} (\bibinfo{year}{2012}).

\bibitem[{\citenamefont{Roos et~al.}(2006)\citenamefont{Roos, Chwalla, Kim,
  Riebe, and Blatt}}]{Roos2006}
\bibinfo{author}{\bibfnamefont{C.~F.} \bibnamefont{Roos}},
  \bibinfo{author}{\bibfnamefont{M.}~\bibnamefont{Chwalla}},
  \bibinfo{author}{\bibfnamefont{K.}~\bibnamefont{Kim}},
  \bibinfo{author}{\bibfnamefont{M.}~\bibnamefont{Riebe}}, \bibnamefont{and}
  \bibinfo{author}{\bibfnamefont{R.}~\bibnamefont{Blatt}},
  \bibinfo{journal}{Nature} \textbf{\bibinfo{volume}{443}},
  \bibinfo{pages}{316} (\bibinfo{year}{2006}).

\bibitem[{\citenamefont{Vengalattore et~al.}(2007)\citenamefont{Vengalattore,
  Higbie, Leslie, Guzman, Sadler, and Stamper-Kurn}}]{Vengalattore2007}
\bibinfo{author}{\bibfnamefont{M.}~\bibnamefont{Vengalattore}},
  \bibinfo{author}{\bibfnamefont{J.~M.} \bibnamefont{Higbie}},
  \bibinfo{author}{\bibfnamefont{S.~R.} \bibnamefont{Leslie}},
  \bibinfo{author}{\bibfnamefont{J.}~\bibnamefont{Guzman}},
  \bibinfo{author}{\bibfnamefont{L.~E.} \bibnamefont{Sadler}},
  \bibnamefont{and} \bibinfo{author}{\bibfnamefont{D.~M.}
  \bibnamefont{Stamper-Kurn}}, \bibinfo{journal}{Physical Review Letters}
  \textbf{\bibinfo{volume}{98}}, \bibinfo{pages}{200801}
  (\bibinfo{year}{2007}).

\bibitem[{\citenamefont{T{\'{o}}th and Apellaniz}(2014)}]{Toth2014}
\bibinfo{author}{\bibfnamefont{G.}~\bibnamefont{T{\'{o}}th}} \bibnamefont{and}
  \bibinfo{author}{\bibfnamefont{I.}~\bibnamefont{Apellaniz}},
  \bibinfo{journal}{Journal of Physics A} \textbf{\bibinfo{volume}{47}},
  \bibinfo{pages}{424006} (\bibinfo{year}{2014}).

\bibitem[{\citenamefont{Napolitano et~al.}(2011)\citenamefont{Napolitano,
  Koschorreck, Dubost, Behbood, Sewell, and Mitchell}}]{Napolitano2011}
\bibinfo{author}{\bibfnamefont{M.}~\bibnamefont{Napolitano}},
  \bibinfo{author}{\bibfnamefont{M.}~\bibnamefont{Koschorreck}},
  \bibinfo{author}{\bibfnamefont{B.}~\bibnamefont{Dubost}},
  \bibinfo{author}{\bibfnamefont{N.}~\bibnamefont{Behbood}},
  \bibinfo{author}{\bibfnamefont{R.~J.} \bibnamefont{Sewell}},
  \bibnamefont{and} \bibinfo{author}{\bibfnamefont{M.~W.}
  \bibnamefont{Mitchell}}, in \emph{\bibinfo{booktitle}{Conference on Lasers
  and Electro-Optics Europe and 12th European Quantum Electronics Conference}}
  (\bibinfo{publisher}{IEEE}, \bibinfo{year}{2011}).

\bibitem[{\citenamefont{Ballester}(2004)}]{Ballester2004}
\bibinfo{author}{\bibfnamefont{M.~A.} \bibnamefont{Ballester}},
  \bibinfo{journal}{Physical Review A} \textbf{\bibinfo{volume}{70}},
  \bibinfo{pages}{032310} (\bibinfo{year}{2004}).

\bibitem[{\citenamefont{Datta and Shaji}(2012)}]{Datta2012}
\bibinfo{author}{\bibfnamefont{A.}~\bibnamefont{Datta}} \bibnamefont{and}
  \bibinfo{author}{\bibfnamefont{A.}~\bibnamefont{Shaji}},
  \bibinfo{journal}{Modern Physics Letters B} \textbf{\bibinfo{volume}{26}},
  \bibinfo{pages}{1230010} (\bibinfo{year}{2012}).

\bibitem[{\citenamefont{Shor}(1994)}]{Shor1994}
\bibinfo{author}{\bibfnamefont{P.~W.} \bibnamefont{Shor}}, in
  \emph{\bibinfo{booktitle}{Proceedings of the 35th Annual Symposium on
  Fundamentals of Computer Science}} (\bibinfo{publisher}{IEEE},
  \bibinfo{year}{1994}), pp. \bibinfo{pages}{124--134}.

\bibitem[{\citenamefont{Grover}(1996)}]{Grover1996}
\bibinfo{author}{\bibfnamefont{L.~K.} \bibnamefont{Grover}}, in
  \emph{\bibinfo{booktitle}{Proceedings of the 28th annual ACM symposium on
  Theory of computing}} (\bibinfo{publisher}{ACM}, \bibinfo{year}{1996}), pp.
  \bibinfo{pages}{212--219}.

\bibitem[{\citenamefont{Kitaev}(1995)}]{Kitaev1995}
\bibinfo{author}{\bibfnamefont{A.~Y.} \bibnamefont{Kitaev}},
  \bibinfo{journal}{arXiv:9511026}  (\bibinfo{year}{1995}).

\bibitem[{\citenamefont{Demkowicz-Dobrza{\'{n}}ski and
  Markiewicz}(2015)}]{Demkowicz-Dobrzanski2015}
\bibinfo{author}{\bibfnamefont{R.}~\bibnamefont{Demkowicz-Dobrza{\'{n}}ski}}
  \bibnamefont{and}
  \bibinfo{author}{\bibfnamefont{M.}~\bibnamefont{Markiewicz}},
  \bibinfo{journal}{Physical Review A} \textbf{\bibinfo{volume}{91}},
  \bibinfo{pages}{062322} (\bibinfo{year}{2015}).

\bibitem[{\citenamefont{Szczykulska et~al.}(2016)\citenamefont{Szczykulska,
  Baumgratz, and Datta}}]{Szczykulska2016}
\bibinfo{author}{\bibfnamefont{M.}~\bibnamefont{Szczykulska}},
  \bibinfo{author}{\bibfnamefont{T.}~\bibnamefont{Baumgratz}},
  \bibnamefont{and} \bibinfo{author}{\bibfnamefont{A.}~\bibnamefont{Datta}},
  \bibinfo{journal}{Advances in Physics: X} \textbf{\bibinfo{volume}{1}},
  \bibinfo{pages}{621} (\bibinfo{year}{2016}).

\bibitem[{\citenamefont{Spagnolo et~al.}(2012)\citenamefont{Spagnolo, Aparo,
  Vitelli, Crespi, Ramponi, Osellame, Mataloni, and Sciarrino}}]{Spagnolo2012}
\bibinfo{author}{\bibfnamefont{N.}~\bibnamefont{Spagnolo}},
  \bibinfo{author}{\bibfnamefont{L.}~\bibnamefont{Aparo}},
  \bibinfo{author}{\bibfnamefont{C.}~\bibnamefont{Vitelli}},
  \bibinfo{author}{\bibfnamefont{A.}~\bibnamefont{Crespi}},
  \bibinfo{author}{\bibfnamefont{R.}~\bibnamefont{Ramponi}},
  \bibinfo{author}{\bibfnamefont{R.}~\bibnamefont{Osellame}},
  \bibinfo{author}{\bibfnamefont{P.}~\bibnamefont{Mataloni}}, \bibnamefont{and}
  \bibinfo{author}{\bibfnamefont{F.}~\bibnamefont{Sciarrino}},
  \bibinfo{journal}{Scientific Reports} \textbf{\bibinfo{volume}{2}},
  \bibinfo{pages}{862} (\bibinfo{year}{2012}).

\bibitem[{\citenamefont{Macchiavello}(2003)}]{Macchiavello2003}
\bibinfo{author}{\bibfnamefont{C.}~\bibnamefont{Macchiavello}},
  \bibinfo{journal}{Physical Review A} \textbf{\bibinfo{volume}{67}},
  \bibinfo{pages}{062302} (\bibinfo{year}{2003}).

\bibitem[{\citenamefont{Humphreys et~al.}(2013)\citenamefont{Humphreys,
  Barbieri, Datta, and Walmsley}}]{Humphreys2013}
\bibinfo{author}{\bibfnamefont{P.~C.} \bibnamefont{Humphreys}},
  \bibinfo{author}{\bibfnamefont{M.}~\bibnamefont{Barbieri}},
  \bibinfo{author}{\bibfnamefont{A.}~\bibnamefont{Datta}}, \bibnamefont{and}
  \bibinfo{author}{\bibfnamefont{I.~A.} \bibnamefont{Walmsley}},
  \bibinfo{journal}{Physical Review Letters} \textbf{\bibinfo{volume}{111}},
  \bibinfo{pages}{070403} (\bibinfo{year}{2013}).

\bibitem[{\citenamefont{Vidrighin et~al.}(2014)\citenamefont{Vidrighin, Donati,
  Genoni, Jin, Kolthammer, Kim, Datta, Barbieri, and Walmsley}}]{Vidrighin2014}
\bibinfo{author}{\bibfnamefont{M.~D.} \bibnamefont{Vidrighin}},
  \bibinfo{author}{\bibfnamefont{G.}~\bibnamefont{Donati}},
  \bibinfo{author}{\bibfnamefont{M.~G.} \bibnamefont{Genoni}},
  \bibinfo{author}{\bibfnamefont{X.-M.} \bibnamefont{Jin}},
  \bibinfo{author}{\bibfnamefont{W.~S.} \bibnamefont{Kolthammer}},
  \bibinfo{author}{\bibfnamefont{M.}~\bibnamefont{Kim}},
  \bibinfo{author}{\bibfnamefont{A.}~\bibnamefont{Datta}},
  \bibinfo{author}{\bibfnamefont{M.}~\bibnamefont{Barbieri}}, \bibnamefont{and}
  \bibinfo{author}{\bibfnamefont{I.~A.} \bibnamefont{Walmsley}},
  \bibinfo{journal}{Nature Communications} \textbf{\bibinfo{volume}{5}},
  \bibinfo{pages}{3532} (\bibinfo{year}{2014}).

\bibitem[{\citenamefont{Yao et~al.}(2014)\citenamefont{Yao, Ge, Xiao, Wang, and
  Sun}}]{Yao2014}
\bibinfo{author}{\bibfnamefont{Y.}~\bibnamefont{Yao}},
  \bibinfo{author}{\bibfnamefont{L.}~\bibnamefont{Ge}},
  \bibinfo{author}{\bibfnamefont{X.}~\bibnamefont{Xiao}},
  \bibinfo{author}{\bibfnamefont{X.}~\bibnamefont{Wang}}, \bibnamefont{and}
  \bibinfo{author}{\bibfnamefont{C.~P.} \bibnamefont{Sun}},
  \bibinfo{journal}{Physical Review A} \textbf{\bibinfo{volume}{90}},
  \bibinfo{pages}{062113} (\bibinfo{year}{2014}).

\bibitem[{\citenamefont{Berry et~al.}(2015)\citenamefont{Berry, Tsang, Hall,
  and Wiseman}}]{Berry2015}
\bibinfo{author}{\bibfnamefont{D.~W.} \bibnamefont{Berry}},
  \bibinfo{author}{\bibfnamefont{M.}~\bibnamefont{Tsang}},
  \bibinfo{author}{\bibfnamefont{M.~J.~W.} \bibnamefont{Hall}},
  \bibnamefont{and} \bibinfo{author}{\bibfnamefont{H.~M.}
  \bibnamefont{Wiseman}}, \bibinfo{journal}{Physical Review X}
  \textbf{\bibinfo{volume}{5}}, \bibinfo{pages}{031018} (\bibinfo{year}{2015}).

\bibitem[{\citenamefont{Schirmer and Oi}(2009)}]{Schirmer2009}
\bibinfo{author}{\bibfnamefont{S.~G.} \bibnamefont{Schirmer}} \bibnamefont{and}
  \bibinfo{author}{\bibfnamefont{D.~K.~L.} \bibnamefont{Oi}},
  \bibinfo{journal}{Physical Review A} \textbf{\bibinfo{volume}{80}},
  \bibinfo{pages}{022333} (\bibinfo{year}{2009}).

\bibitem[{\citenamefont{{Da Silva} et~al.}(2011)\citenamefont{{Da Silva},
  Landon-Cardinal, and Poulin}}]{DaSilva2011}
\bibinfo{author}{\bibfnamefont{M.~P.} \bibnamefont{{Da Silva}}},
  \bibinfo{author}{\bibfnamefont{O.}~\bibnamefont{Landon-Cardinal}},
  \bibnamefont{and} \bibinfo{author}{\bibfnamefont{D.}~\bibnamefont{Poulin}},
  \bibinfo{journal}{Physical Review Letters} \textbf{\bibinfo{volume}{107}},
  \bibinfo{pages}{210404} (\bibinfo{year}{2011}).

\bibitem[{\citenamefont{Yuan}(2016)}]{Yuan2016}
\bibinfo{author}{\bibfnamefont{H.}~\bibnamefont{Yuan}},
  \bibinfo{journal}{Physical Review Letters} \textbf{\bibinfo{volume}{117}},
  \bibinfo{pages}{160801} (\bibinfo{year}{2016}).

\bibitem[{\citenamefont{Zhang and Fan}(2014)}]{Zhang2014}
\bibinfo{author}{\bibfnamefont{Y.-R.} \bibnamefont{Zhang}} \bibnamefont{and}
  \bibinfo{author}{\bibfnamefont{H.}~\bibnamefont{Fan}},
  \bibinfo{journal}{Physical Review A} \textbf{\bibinfo{volume}{90}},
  \bibinfo{pages}{043818} (\bibinfo{year}{2014}).

\bibitem[{\citenamefont{Pang and Brun}(2014)}]{Pang2014}
\bibinfo{author}{\bibfnamefont{S.}~\bibnamefont{Pang}} \bibnamefont{and}
  \bibinfo{author}{\bibfnamefont{T.~A.} \bibnamefont{Brun}},
  \bibinfo{journal}{Physical Review A} \textbf{\bibinfo{volume}{90}},
  \bibinfo{pages}{022117} (\bibinfo{year}{2014}).

\bibitem[{\citenamefont{Baumgratz and Datta}(2016)}]{Baumgratz2016}
\bibinfo{author}{\bibfnamefont{T.}~\bibnamefont{Baumgratz}} \bibnamefont{and}
  \bibinfo{author}{\bibfnamefont{A.}~\bibnamefont{Datta}},
  \bibinfo{journal}{Physical Review Letters} \textbf{\bibinfo{volume}{116}},
  \bibinfo{pages}{030801} (\bibinfo{year}{2016}).

\bibitem[{\citenamefont{Farhi and Gutmann}(1998)}]{Farhi1998}
\bibinfo{author}{\bibfnamefont{E.}~\bibnamefont{Farhi}} \bibnamefont{and}
  \bibinfo{author}{\bibfnamefont{S.}~\bibnamefont{Gutmann}},
  \bibinfo{journal}{Physical Review A} \textbf{\bibinfo{volume}{57}},
  \bibinfo{pages}{2403} (\bibinfo{year}{1998}).

\bibitem[{\citenamefont{Liu et~al.}(2016)\citenamefont{Liu, Thompson,
  Weedbrook, Lloyd, Vedral, Gu, and Modi}}]{Liu2016a}
\bibinfo{author}{\bibfnamefont{N.}~\bibnamefont{Liu}},
  \bibinfo{author}{\bibfnamefont{J.}~\bibnamefont{Thompson}},
  \bibinfo{author}{\bibfnamefont{C.}~\bibnamefont{Weedbrook}},
  \bibinfo{author}{\bibfnamefont{S.}~\bibnamefont{Lloyd}},
  \bibinfo{author}{\bibfnamefont{V.}~\bibnamefont{Vedral}},
  \bibinfo{author}{\bibfnamefont{M.}~\bibnamefont{Gu}}, \bibnamefont{and}
  \bibinfo{author}{\bibfnamefont{K.}~\bibnamefont{Modi}},
  \bibinfo{journal}{Physical Review A} \textbf{\bibinfo{volume}{93}},
  \bibinfo{pages}{052304} (\bibinfo{year}{2016}).

\bibitem[{\citenamefont{Imai and Fujiwara}(2007)}]{Imai2007}
\bibinfo{author}{\bibfnamefont{H.}~\bibnamefont{Imai}} \bibnamefont{and}
  \bibinfo{author}{\bibfnamefont{A.}~\bibnamefont{Fujiwara}},
  \bibinfo{journal}{Journal of Physics A} \textbf{\bibinfo{volume}{40}},
  \bibinfo{pages}{4391} (\bibinfo{year}{2007}).

\bibitem[{\citenamefont{Yuan and Fung}(2015)}]{Yuan2015}
\bibinfo{author}{\bibfnamefont{H.}~\bibnamefont{Yuan}} \bibnamefont{and}
  \bibinfo{author}{\bibfnamefont{C.-H.~F.} \bibnamefont{Fung}},
  \bibinfo{journal}{Physical Review Letters} \textbf{\bibinfo{volume}{115}},
  \bibinfo{pages}{110401} (\bibinfo{year}{2015}).

\bibitem[{\citenamefont{Stinespring}(1955)}]{Stinespring1955}
\bibinfo{author}{\bibfnamefont{W.~F.} \bibnamefont{Stinespring}},
  \bibinfo{journal}{Proceedings of the American Mathematical Society}
  \textbf{\bibinfo{volume}{2}}, \bibinfo{pages}{211} (\bibinfo{year}{1955}).

\bibitem[{\citenamefont{Chernoff}(1952)}]{Chernoff1952}
\bibinfo{author}{\bibfnamefont{H.}~\bibnamefont{Chernoff}},
  \bibinfo{journal}{The Annals of Mathematical Statistics}
  \textbf{\bibinfo{volume}{23}}, \bibinfo{pages}{493} (\bibinfo{year}{1952}).

\bibitem[{\citenamefont{Hayashi et~al.}(2005)\citenamefont{Hayashi, Hashimoto,
  and Horibe}}]{Hayashi2005}
\bibinfo{author}{\bibfnamefont{A.}~\bibnamefont{Hayashi}},
  \bibinfo{author}{\bibfnamefont{T.}~\bibnamefont{Hashimoto}},
  \bibnamefont{and} \bibinfo{author}{\bibfnamefont{M.}~\bibnamefont{Horibe}},
  \bibinfo{journal}{Physical Review A} \textbf{\bibinfo{volume}{72}},
  \bibinfo{pages}{032325} (\bibinfo{year}{2005}).

\bibitem[{\citenamefont{Suzuki}(1976)}]{Suzuki1976}
\bibinfo{author}{\bibfnamefont{M.}~\bibnamefont{Suzuki}},
  \bibinfo{journal}{Communications in Mathematical Physics}
  \textbf{\bibinfo{volume}{51}}, \bibinfo{pages}{183} (\bibinfo{year}{1976}).

\bibitem[{\citenamefont{{Van Trees}}(2004)}]{VanTrees2004}
\bibinfo{author}{\bibfnamefont{H.~L.} \bibnamefont{{Van Trees}}},
  \emph{\bibinfo{title}{{Detection, estimation, and modulation theory}}}
  (\bibinfo{publisher}{John Wiley \& Sons}, \bibinfo{year}{2004}).

\bibitem[{\citenamefont{Wisniacki}(2003)}]{Wisniacki2003}
\bibinfo{author}{\bibfnamefont{D.~A.} \bibnamefont{Wisniacki}},
  \bibinfo{journal}{Physical Review E} \textbf{\bibinfo{volume}{67}},
  \bibinfo{pages}{016205} (\bibinfo{year}{2003}).

\end{thebibliography}
\end{document}